\let\coloneqq\relax
\let\eqqcolon\relax
\newcolumntype{x}[1]{>{\centering\arraybackslash}p{#1}}
\newtheorem{thm}{Theorem}
\newtheorem*{thm*}{Theorem}
\newtheorem{prop}[thm]{Proposition}
\newtheorem*{prop*}{Proposition}
\newtheorem{lemma}[thm]{Lemma}
\newtheorem*{lemma*}{Lemma}
\newtheorem{cor}[thm]{Corollary}
\newtheorem*{cor*}{Corollary}
\newtheorem*{cj*}{Conjecture}
\newtheorem{Def}[thm]{Definition}
\newtheorem*{Def*}{Definition}
\newtheorem*{question*}{Question}
\newtheorem*{problem*}{Problem}
\def\thmhead@plain#1#2#3{%
  \thmname{#1}\thmnumber{\@ifnotempty{#1}{ }\@upn{#2}}%
  \thmnote{ {\the\thm@notefont#3}}}
\let\thmhead\thmhead@plain
\theoremstyle{definition}
\newtheorem{rem}[thm]{Remark}
\newcommand{\bb}{\begin{equation}\begin{aligned}\hspace{0pt}}
\newcommand{\bbb}{\begin{equation*}\begin{aligned}}
\newcommand{\ee}{\end{aligned}\end{equation}}
\newcommand{\eee}{\end{aligned}\end{equation*}}
\newcommand*{\coloneqq}{\mathrel{\vcenter{\baselineskip0.5ex \lineskiplimit0pt \hbox{\scriptsize.}\hbox{\scriptsize.}}} =}
\newcommand*{\eqqcolon}{= \mathrel{\vcenter{\baselineskip0.5ex \lineskiplimit0pt \hbox{\scriptsize.}\hbox{\scriptsize.}}}}
\newcommand{\eqt}[1]{\stackrel{\mathclap{\scriptsize \mbox{#1}}}{=}}
\newcommand{\leqt}[1]{\stackrel{\mathclap{\scriptsize \mbox{#1}}}{\leq}}
\newcommand{\geqt}[1]{\stackrel{\mathclap{\scriptsize \mbox{#1}}}{\geq}}
\newcommand{\ketbra}[1]{\ket{#1}\!\!\bra{#1}}
\newcommand{\sumno}{\sum\nolimits}
\renewcommand{\epsilon}{\varepsilon}
\newcommand{\id}{\mathds{1}}
\newcommand{\E}{\mathds{E}}
\DeclareMathOperator{\Tr}{Tr}
\DeclareMathAlphabet{\pazocal}{OMS}{zplm}{m}{n}
\DeclareMathOperator{\supp}{supp}
\DeclareMathOperator{\spec}{spec}
\newcommand{\EE}{\pazocal{E}}
\newcommand{\lsmatrix}{\left(\begin{smallmatrix}}
\newcommand{\rsmatrix}{\end{smallmatrix}\right)}
\newcommand*\rel@kern[1]{\kern#1\dimexpr\macc@kerna}
\newcommand*\widebar[1]{%
  \begingroup
  \def\mathaccent##1##2{%
    \rel@kern{0.8}%
    \overline{\rel@kern{-0.8}\macc@nucleus\rel@kern{0.2}}%
    \rel@kern{-0.2}%
  }%
  \macc@depth\@ne
  \let\math@bgroup\@empty \let\math@egroup\macc@set@skewchar
  \mathsurround\z@ \frozen@everymath{\mathgroup\macc@group\relax}%
  \macc@set@skewchar\relax
  \let\mathaccentV\macc@nested@a
  \macc@nested@a\relax111{#1}%
  \endgroup
}
\tikzset{meter/.append style={draw, inner sep=10, rectangle, font=\vphantom{A}, minimum width=30, line width=.8, path picture={\draw[black] ([shift={(.1,.3)}]path picture bounding box.south west) to[bend left=50] ([shift={(-.1,.3)}]path picture bounding box.south east);\draw[black,-latex] ([shift={(0,.1)}]path picture bounding box.south) -- ([shift={(.3,-.1)}]path picture bounding box.north);}}}
\tikzset{roundnode/.append style={circle, draw=black, fill=gray!20, thick, minimum size=10mm}}
\tikzset{squarenode/.style={rectangle, draw=black, fill=none, thick, minimum size=10mm}}
\definecolor{Blues5seq1}{RGB}{239,243,255}
\definecolor{Blues5seq2}{RGB}{189,215,231}
\definecolor{Blues5seq3}{RGB}{107,174,214}
\definecolor{Blues5seq4}{RGB}{49,130,189}
\definecolor{Blues5seq5}{RGB}{8,81,156}
\definecolor{Greens5seq1}{RGB}{237,248,233}
\definecolor{Greens5seq2}{RGB}{186,228,179}
\definecolor{Greens5seq3}{RGB}{116,196,118}
\definecolor{Greens5seq4}{RGB}{49,163,84}
\definecolor{Greens5seq5}{RGB}{0,109,44}
\definecolor{Reds5seq1}{RGB}{254,229,217}
\definecolor{Reds5seq2}{RGB}{252,174,145}
\definecolor{Reds5seq3}{RGB}{251,106,74}
\definecolor{Reds5seq4}{RGB}{222,45,38}
\definecolor{Reds5seq5}{RGB}{165,15,21}
\newenvironment{boxedthm}[1]%
	{\expandafter\ifstrequal\expandafter{#1}{orange}{\begin{tcolorbox}[colback=red!15,colframe=orange!15,breakable,enhanced]}{\begin{tcolorbox}[colback=Blues5seq1,colframe=Blues5seq5,breakable,enhanced]}}%
	{\end{tcolorbox}}
\renewcommand{\EE}[1]{\underset{\scaleobj{.8}{#1}}{\mathds{E}\,}}
\begin{document}

\title{Random purification channel made simple}
\author[1]{Filippo Girardi}
\author[1]{Francesco Anna Mele}
\author[1]{Ludovico Lami}

\affil[1]{Scuola Normale Superiore, Piazza dei Cavalieri 7, 56126 Pisa, Italy\vspace{-3em}}

\date{}
\setcounter{Maxaffil}{0}
\renewcommand\Affilfont{\itshape\small}

\maketitle
\begin{abstract}
The recently introduced random purification channel, which converts $n$ i.i.d.\ copies of any mixed quantum state into a uniform convex combination of $n$ i.i.d.\ copies of its purifications, has proved to be an extremely useful tool in quantum learning theory. Here we give a remarkably simple construction of this channel, making its known properties --- and several new ones --- immediately transparent. In particular, we show that the channel also purifies non-i.i.d.\ states: it transforms any permutationally symmetric state into a uniform convex combination of permutationally symmetric purifications, each differing only by a tensor-product unitary acting on the purifying system. We then apply the channel to give a one-line proof of (a stronger version of) the recently established Uhlmann's theorem for quantum divergences. 
\end{abstract}

\section{Introduction}
Recently, paralleling earlier findings in property testing~\cite{soleimanifar2022testingmatrixproductstates,chen2025localtestunitarilyinvariant}, Tang, Wright, and Zhandry~\cite{tang2025} established the following beautiful result: there exists a channel that transforms $n$ copies of an arbitrary mixed state $\rho$ into $n$ copies of a uniformly random purification of $\rho$. More precisely, they showed the following.

\begin{lemma}[(Random purification channel)]\label{lemma_purif}
    Let $\mathcal{H}_A$ be an Hilbert space. For any $n\geq 1$, there exists a quantum channel $\Lambda_{\rm purify}^{(n)}:\mathcal{L}(\mathcal{H}_A^{\otimes n})\to \mathcal{L}\big((\mathcal{H}_A\otimes\mathcal{H}_B)^{\otimes n}\big)$, where $\mathcal{H}_B$ is isomorphic to $\mathcal{H}_A$, such that, for any arbitrary $\rho_A\in\mathcal{D}(\mathcal{H}_A)$,
    \bb\label{eq:property}
    \Lambda^{(n)}_{\mathrm{purify}}(\rho_A^{\otimes n})
        = \EE{{U}_B}\!\left[
            (\id_A \otimes U_B)\,
            (\psi_\rho)_{AB}\,
            (\id_A \otimes U_B^\dagger)
        \right]^{\otimes n},
    \ee
    where the expectation is taken over Haar-random unitaries $U_B$ on $\mathcal{H}_B$, 
    $(\psi_\rho)_{AB}$ denotes any fixed purification of $\rho$ in $\mathcal{H}_A\otimes\mathcal{H}_B$, and $\id_A$ is the identity over $\mathcal{H}_A$.  
    In other words, this channel maps $n$ copies of $\rho_A$ to $n$ copies of a uniformly random purification of $\rho_A$.
\end{lemma}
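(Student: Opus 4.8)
The plan is to realize the channel as a Haar twirl applied to a canonical purification, and then to show that this twirl is precisely what converts the (a priori nonlinear) purification map into a genuine \emph{linear} quantum channel. Throughout, let $\mathcal{T}_S$ denote the Haar twirl $\E_{U}[U^{\otimes n}(\cdot)U^{\dagger\otimes n}]$ acting on the $n$-fold system $S\in\{A,B\}$ (identity elsewhere); this is manifestly a channel, being an average of unitary conjugations.

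First I would fix the canonical purification $\ket{\psi_\rho}_{AB}=(\sqrt{\rho}\otimes\id_B)\ket{\Omega}$, where $\ket{\Omega}=\sum_i\ket{i}_A\ket{i}_B$ is the unnormalized maximally entangled vector, so that $\ket{\psi_\rho}^{\otimes n}=(\sqrt{\rho^{\otimes n}}\otimes\id_{B^n})\ket{\Omega_n}$ with $\ket{\Omega_n}=\sum_{\vec i}\ket{\vec i}_{A^n}\ket{\vec i}_{B^n}$. The right-hand side of \eqref{eq:property}, i.e.\ the twirl of $n$ copies of a single random purification, is then $\sigma(\rho)=\mathcal{T}_B\big((\ketbra{\psi_\rho})^{\otimes n}\big)$. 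Using the transpose trick $(\id\otimes M)\ket{\Omega}=(M^T\otimes\id)\ket{\Omega}$ together with the invariance of the Haar measure under $U\mapsto\bar U$, I would move the twirl from $B$ onto $A$ while keeping the state-dependent square roots on the outside, arriving at the clean identity $\sigma(\rho)=(\sqrt{\rho^{\otimes n}}\otimes\id_{B^n})\,\Xi\,(\sqrt{\rho^{\otimes n}}\otimes\id_{B^n})$, where $\Xi:=\mathcal{T}_A(\ketbra{\Omega_n})$ is a single \emph{fixed} operator independent of $\rho$.

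The heart of the argument --- and the step I expect to be the main obstacle --- is to show that, despite the explicit $\sqrt{\rho^{\otimes n}}$ on either side, $\sigma(\rho)$ is in fact \emph{linear} in $\rho^{\otimes n}$. The mechanism is Schur--Weyl duality: decomposing $\mathcal{H}_A^{\otimes n}\cong\bigoplus_\lambda\mathcal{Q}_\lambda\otimes\mathcal{P}_\lambda$, with $U(d)$ acting on the irrep factor $\mathcal{Q}_\lambda$ of dimension $m_\lambda$ and $S_n$ acting on $\mathcal{P}_\lambda$, the twirl $\mathcal{T}_A$ is the projection onto the commutant $\mathrm{span}\{W_\pi\}$, so $\Xi$ acts on each $\mathcal{Q}_\lambda$ simply as $\id_{\mathcal{Q}_\lambda}/m_\lambda$. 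Since $\rho^{\otimes n}$ acts on block $\lambda$ as $q_\lambda(\rho)\otimes\id_{\mathcal{P}_\lambda}$, its square root acts on $\mathcal{Q}_\lambda$ as $\sqrt{q_\lambda(\rho)}$, and conjugating $\id_{\mathcal{Q}_\lambda}/m_\lambda$ by these two square roots collapses them into $q_\lambda(\rho)/m_\lambda$ --- exactly the $\lambda$-block of $\rho^{\otimes n}$, which is manifestly linear. Carefully tracking the transpose-trick conjugations, to confirm that no spurious $\bar\rho$ survives, is the one technical point to get right here.

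Finally I would assemble the channel. The previous step exhibits $\sigma(\rho)=\Phi(\rho^{\otimes n})$ for an explicit linear map of the block form $\Phi(Y)=\sum_\lambda \tfrac{1}{m_\lambda}\,q_\lambda(Y)\otimes\omega_\lambda$ with each $\omega_\lambda\succeq 0$ fixed. Complete positivity is then immediate, since $Y\mapsto q_\lambda(Y)$ is a CP compression and tensoring with the positive $\omega_\lambda$ preserves positivity, and trace preservation follows from a short normalization check. To obtain a channel on all of $\mathcal{L}(\mathcal{H}_A^{\otimes n})$ I would precompose with the permutation twirl $\mathcal{P}_{\mathrm{sym}}(\cdot)=\E_{\pi}[W_\pi(\cdot)W_\pi^\dagger]$, itself a channel and the identity on $\rho^{\otimes n}$; setting $\Lambda^{(n)}_{\mathrm{purify}}=\Phi\circ\mathcal{P}_{\mathrm{sym}}$ yields a CPTP map satisfying \eqref{eq:property}. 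As a bonus, because $\mathcal{P}_{\mathrm{sym}}$ fixes every permutationally symmetric input, the same construction immediately handles the non-i.i.d.\ symmetric case, matching the paper's stated generalization.
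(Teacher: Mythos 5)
Your argument is correct, and it rests on the same two pillars as the paper's proof: (a) the transpose trick plus Haar invariance, which turns the right-hand side of~\eqref{eq:property} into $(\sqrt{\rho^{\otimes n}}\otimes\id_{B^n})\,\Xi\,(\sqrt{\rho^{\otimes n}}\otimes\id_{B^n})$ for a \emph{fixed} twirled maximally entangled operator $\Xi$ (your $\Xi$ coincides with the paper's $R_n$ of~\eqref{eq:R_n}, again by the transpose trick), and (b) Schur--Weyl duality to show that this expression, despite the explicit square roots, is linear and completely positive in $\rho^{\otimes n}$. Where you diverge is in how the linearization is carried out. The paper never computes the block structure of $R_n$: it only uses that $R_n$ lies in the span of $(P_\pi)_{A^n}\otimes(P_{\pi'})_{B^n}$, hence commutes with every $X_{A^n}\otimes Y_{B^n}$ with $X,Y$ permutationally symmetric, which yields the single identity $\sqrt{\rho_{A^n}}\,R_n\,\sqrt{\rho_{A^n}}=\sqrt{R_n}\,\rho_{A^n}\,\sqrt{R_n}$ and therefore the closed-form channel $\Lambda^{(n)}(\cdot)=\sqrt{R_n}\bigl(\cdot\otimes\id_{B^n}\bigr)\sqrt{R_n}$, manifestly CPTP on \emph{all} inputs and immediately applicable to non-i.i.d.\ permutationally symmetric states. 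You instead compute the explicit block form $\Xi=\bigoplus_\lambda \tfrac{1}{m_\lambda}\id_{\mathcal{Q}_\lambda}\otimes\omega_\lambda$, collapse $\sqrt{q_\lambda(\rho)}\,\id\,\sqrt{q_\lambda(\rho)}=q_\lambda(\rho)$ blockwise, and then restore well-definedness on general inputs by precomposing with the permutation twirl. This buys you an explicit description of the channel's action irrep by irrep (useful if one wants, e.g., its Kraus rank or spectrum), at the cost of some bookkeeping the paper avoids: you must define $q_\lambda$ on non-symmetric operators (compression to block $\lambda$ followed by the normalized partial trace over $\mathcal{P}_\lambda$, which your twirl precomposition implicitly handles) and use $q_\lambda(\sqrt{\rho})=\sqrt{q_\lambda(\rho)}$, both routine. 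The one step you flagged as delicate --- that no $\bar{\rho}$ survives the transpose-trick manipulation --- does indeed work out, because the conjugation lands entirely on the Haar variable and is absorbed by the invariance of the measure.
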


Utsumi, Nakata, Wang, and Takagi \cite{utsumi2025} leveraged this technique to the fidelity estimation task between two given unknown states, providing an improvement upon the state-of-the-art sample complexity. Then, Pelecanos, Spilecki, Tang, and Wright~\cite{pelecanos2025} applied this lemma to obtain a remarkably simple proof of the sample complexity of quantum state tomography for mixed states, one of the central results in quantum learning theory. They also left the reader with an intriguing question: \emph{are there additional applications of the random purification channel, even beyond quantum learning theory?}

To this we could add another, related question. The construction of the random purification channel presented in~\cite{tang2025, pelecanos2025} is quite convoluted, as it makes use of some relatively heavy machinery from representation theory and Schur--Weyl duality. \emph{Is there a more transparent construction of the random purification channel?}

In this paper, we answer both questions in the affirmative. First, we present a simple proof of the above Lemma~\ref{lemma_purif}\,---\,substantially more elementary than the original proofs in~\cite{tang2025, pelecanos2025}. Second, we propose an application of the random purification channel in quantum Shannon theory~\cite{MARK}. Specifically, we use it to provide a remarkably simple proof of the recently established \emph{Uhlmann theorem for divergences}~\cite{Mazzola_2025}, while simultaneously deriving a strictly more general result.

\section{A simple construction of the random purification channel}

Let us start by fixing some terminology. Given a Hilbert space $\mathcal{H}$ and an operator $X_n$ on its $n$-fold tensor power $\mathcal{H}^{\otimes n}$, we say that $X_n$ is \emph{permutationally symmetric} if, for all permutations $\pi\in S_n$ of a set of $n$ elements, we have $P_\pi^{\vphantom{\dag}} X_n P_\pi^\dag = X_n$, where $P_\pi$ is the unitary that permutes the tensor factors of $\mathcal{H}^{\otimes n}$ according to $\pi$. Now, let $\Gamma_{AB} \coloneqq \ketbra{\Gamma}_{AB}$, with $\ket{\Gamma}_{AB} \coloneqq \sum_i \ket{i}_A\otimes \ket{i}_B$, be the un-normalised maximally entangled state on a bipartite system $AB$, where $\mathcal{H}_A\simeq \mathcal{H}_B$ are isomorphic and $\{\ket{i}_A\}_i$ and $\{\ket{i}_B\}_i$ are two local orthonormal bases. We define the following positive semi-definite operator on $\big(\mathcal{H}_A\otimes \mathcal{H}_B\big)^{\otimes n}$:
\bb\label{eq:R_n}
    R_n\coloneqq \EE{U_B}\!\left[
            (\id_{A^n} \otimes U_B^{\otimes n})\,
            \Gamma_{AB}^{\otimes n}\,
            (\id_{A^n} \otimes (U_B^\dagger)^{\otimes n})
        \right],
\ee
where $U_B$ is a random unitary on the $B$ system distributed according to the Haar measure.

\begin{lemma}\label{lem:commute} 
The operator $R_n$ defined in~\eqref{eq:R_n} commutes with every operator of the form 
$X_{A^n} \otimes \mathbb{1}_{B^n}$, where $X_{A^n}$ is a permutationally symmetric 
operator on $\mathcal{H}_A^{\otimes n}$. In particular, for any permutationally symmetric state $\rho_{A^n}$ on 
$\mathcal{H}_A^{\otimes n}$, we have
\bb
    \sqrt{\rho_{A^n}}\, R_n \,\sqrt{\rho_{A^n}}
    = 
    \sqrt{R_n}\, \rho_{A^n}\, \sqrt{R_n}\, .
\ee
\end{lemma}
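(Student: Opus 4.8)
The plan is to prove a cleaner, strictly stronger commutation relation on a spanning family of operators and then pass to the general statement by linearity. The structural fact I would rely on is that the permutationally symmetric operators on $\mathcal{H}_A^{\otimes n}$ are exactly the symmetric subspace of $\mathcal{L}(\mathcal{H}_A)^{\otimes n}\cong\mathcal{L}(\mathcal{H}_A^{\otimes n})$ — indeed, conjugation by $P_\pi$ acts on $\mathcal{L}(\mathcal{H}_A)^{\otimes n}$ precisely by permuting the tensor legs — and that this symmetric subspace is linearly spanned by the product powers $\{M^{\otimes n}:M\in\mathcal{L}(\mathcal{H}_A)\}$, hence already by $\{U^{\otimes n}:U\ \text{unitary}\}$. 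Consequently any permutationally symmetric $X_{A^n}$ and $Y_{B^n}$ admit finite expansions $X_{A^n}=\sum_j c_j\,U_j^{\otimes n}$ and $Y_{B^n}=\sum_k d_k\,V_k^{\otimes n}$ with $U_j,V_k$ unitary, so by bilinearity of the commutator it is enough to show that $R_n$ commutes with every operator of the form $U_A^{\otimes n}\otimes V_B^{\otimes n}$.

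For this single identity I would conjugate the defining average~\eqref{eq:R_n} and use two facts: the ricochet identity $(M_A\otimes\id_B)\ket{\Gamma}_{AB}=(\id_A\otimes M^{T})\ket{\Gamma}_{AB}$ for the unnormalised maximally entangled vector, and the bi-invariance of the Haar measure. Conjugating $R_n$ by $U_A^{\otimes n}\otimes V_B^{\otimes n}$ and pushing the $A$-factor $U^{\otimes n}$ across $\ket{\Gamma}^{\otimes n}$ via the ricochet identity turns it into an extra factor $(U^{T})^{\otimes n}$ on the $B$ system; collecting this together with $V^{\otimes n}$ and the averaging unitary $W^{\otimes n}$ reproduces exactly the integrand defining $R_n$, but with $W$ replaced by $VWU^{T}$. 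Since $U^{T}$ is again unitary and $W\mapsto VWU^{T}$ is a Haar-preserving change of variable, the conjugated operator equals $R_n$, i.e.\ $[R_n,\,U_A^{\otimes n}\otimes V_B^{\otimes n}]=0$. Combined with the preceding paragraph, this establishes the first assertion.

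The second, displayed assertion then follows with no further work. Taking $Y_{B^n}=\id_{B^n}$ (trivially permutationally symmetric) in what was just proved shows that $R_n$ commutes with $\rho_{A^n}\otimes\id_{B^n}$. Two commuting positive operators have commuting continuous functional calculi, so $\rho_{A^n}$, $R_n$, $\sqrt{\rho_{A^n}}$ and $\sqrt{R_n}$ all mutually commute; both sides of the claimed equality then collapse to $\rho_{A^n}R_n$, which proves the identity.

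I expect the only delicate point to be the bookkeeping of transposes and complex conjugates in the ricochet step, and in particular checking that $W\mapsto VWU^{T}$ is a genuine Haar-preserving substitution (which rests on $U^{T}$ being unitary). Conceptually, the single nontrivial ingredient is the symmetric-subspace characterisation of permutationally symmetric operators; everything else is bilinearity and invariance of the Haar measure, and the core computation is a one-liner once the ricochet identity is in hand.
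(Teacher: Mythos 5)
Your argument is correct, and its first half is exactly the paper's key computation: conjugating the Haar average defining $R_n$ by $U_A^{\otimes n}\otimes V_B^{\otimes n}$, ricocheting $U^{\otimes n}$ across $\ket{\Gamma}^{\otimes n}$ into $(U^{\intercal})^{\otimes n}$ on $B$, and absorbing $V W U^{\intercal}$ into a Haar-preserving change of variable. Where you genuinely diverge is in how you pass from this special case to general permutationally symmetric $X_{A^n}\otimes Y_{B^n}$. The paper goes in the opposite direction: it uses the commutation of $R_n$ with all $U_A^{\otimes n}\otimes V_B^{\otimes n}$ together with Schur--Weyl duality to expand $R_n$ itself as $\sum_{\pi,\pi'}c_{\pi,\pi'}(P_\pi)_{A^n}\otimes(P_{\pi'})_{B^n}$, and then notes that each permutation operator commutes with $X_{A^n}\otimes Y_{B^n}$. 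You instead expand $X_{A^n}$ and $Y_{B^n}$ as linear combinations of unitary tensor powers and invoke bilinearity. These are dual uses of essentially the same representation-theoretic fact, and both are valid; your route has the mild advantage that the spanning statement for the symmetric subspace of $\mathcal{L}(\mathcal{H})^{\otimes n}$ can be motivated by polarisation rather than by the full bicommutant form of Schur--Weyl duality, while the paper's route has the advantage that the resulting expansion of $R_n$ is reused later in the Theorem's proof (to show $R_n$ is supported on the symmetric subspace). The concluding functional-calculus step is the same in both, modulo the cosmetic difference that you let both sides collapse to $\rho_{A^n}R_n$ while the paper interleaves the square roots.

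One step you should not wave through: the jump from ``spanned by $\{M^{\otimes n}: M\in\mathcal{L}(\mathcal{H}_A)\}$'' (polarisation) to ``spanned already by $\{U^{\otimes n}: U \text{ unitary}\}$'' is not automatic, since $(\sum_i c_i U_i)^{\otimes n}$ is not a combination of the $U_i^{\otimes n}$ alone. It is nonetheless true, either because $\mathrm{U}(d)$ is Zariski-dense in $\mathcal{L}(\mathcal{H}_A)$ and the condition $M^{\otimes n}\in \Span\{U^{\otimes n}\}$ is polynomial in $M$, or more directly because $\Span\{U^{\otimes n}\}$ is closed under products and hence is an algebra, which Schur--Weyl identifies with the full commutant of the permutation representation, i.e.\ with all permutationally symmetric operators. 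Either justification closes the only real gap; with it supplied, your proof is complete.
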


\begin{proof} 
The key observation is that $R_n$ commutes with all local i.i.d.\ unitaries of the form $U_A^{\otimes n} \otimes \mathbb{1}_{B^n}$. Indeed,
\begin{align}
\left(U_A^{\otimes n} \otimes \mathbb{1}_{B^n}\right) R_n \left(U_A^{\otimes n} \otimes \mathbb{1}_{B^n}\right)^\dag &= \EE{W_B} \left[ \left((U_A \otimes W_B)\, \Gamma_{AB}\, (U_A \otimes W_B)^\dag \right)^{\otimes n} \right] \nonumber \\
&= \EE{W_B} \left[ \left(\big(\id_A \otimes W_B U_A^\intercal\big)\, \Gamma_{AB}\, \big(\id_A \otimes W_BU_A^\intercal\big)^\dag \right)^{\otimes n} \right] \\
&= \EE{W'_B} \left[ \left((\id_A \otimes W'_B)\, \Gamma_{AB}\, \big(\id_A \otimes W'_B)^\dag \right)^{\otimes n} \right] \nonumber \\
&= R_n\, , \nonumber 
\end{align}
where on the second line we applied the `transpose trick' and on the third we exploited the left- and right-invariance of the Haar measure.

Since $\left[R_n,\ U_A^{\otimes n} \otimes \mathbb{1}_{B^n}\right]=0$ for any unitary $U_A$, Lemma~\ref{lemma_commutant} implies that $\left[ R_n, \ X_{A^n} \otimes \mathbb{1}_{B^ n} \right] = 0$ for any permutationally symmetric 
operator $X_{A^n}$. As commuting operators are simultaneously diagonalisable, one sees that also $\left[\sqrt{R_n} ,\ X_{A^n} \otimes \mathbb{1}_{B^ n}\right] = 0$ for any permutationally symmetric operator $X_{A^n}$. But then
\bb
\sqrt{\rho_{A^n}} R_n \sqrt{\rho_{A^n}} = \sqrt{\rho_{A^n}}\sqrt{R_n} \sqrt{R_n} \sqrt{\rho_{A^n}} = \sqrt{R_n}\sqrt{\rho_{A^n}}\sqrt{\rho_{A^n}}\sqrt{R_n} = \sqrt{R_n}\rho_{A^n}\sqrt{R_n} ,
\ee
thereby completing the proof.

\end{proof}

\begin{boxedthm}{}
\begin{thm}[(A simple construction of the random purification channel)] Let $\mathcal{H}_A\simeq \mathcal{H}_B$ be two isomorphic Hilbert spaces. For any integer $n\geq 1$, let $R_n$ be the operator on $\big(\mathcal{H}_A\otimes \mathcal{H}_B\big)^{\otimes n}$ defined as in~\eqref{eq:R_n}. The channel 
\bb
    \Lambda^{(n)}(\,\cdot\,)\coloneqq \sqrt{R_n}\big(\,\cdot\,\otimes\,\id_{B^n}\big)\sqrt{R_n}
\ee
is exactly equal to the map $\Lambda^{(n)}_{\mathrm{purify}}$ of~\cite{tang2025,pelecanos2025}. In particular, it satisfies the property~\eqref{eq:property}, namely
\bb\label{eq:lambda_iid}
    \Lambda^{(n)}\big(\rho_A^{\otimes n}\big)
        = \EE{{U}_B}\!\left[
            (\id_A \otimes U_B)\,
            (\psi_\rho)_{AB}\,
            (\id_A \otimes U_B^\dagger)
        \right]^{\otimes n}
\ee
for all states $\rho_A$ on $\mathcal{H}_A$, where the expectation is taken over Haar-random unitaries $U_B$ on $\mathcal{H}_B$, $(\psi_\rho)_{AB}$ denotes any fixed purification of $\rho$ in $\mathcal{H}_A\otimes\mathcal{H}_B$, and $\id_A$ is the identity over $\mathcal{H}_A$. 

\smallskip
More generally, if $\rho_{A^n}$ is a permutationally symmetric state on $\mathcal{H}_A^{\otimes n}$, then $\Lambda^{(n)}$ maps it to a convex combination of permutationally symmetric purifications, each differing only by a tensor-product unitary acting on the purifying system:
    \bb
        \Lambda^{(n)}\big(\rho_{A^n}\big)
        = \EE{{U}_B}\!\left[
            (\id_A \otimes U_B^{\otimes n})\,
            (\psi_\rho)_{A^nB^n}\,\left(
            \id_A \otimes U_B^{\otimes n}\right)^{\dagger}
        \right]\,,
    \ee
where $(\psi_\rho)_{A^nB^n}$ is any fixed permutationally symmetric purification of $\rho_{A^n}$ in $\mathcal{H}_A^{\otimes n}\otimes\mathcal{H}_B^{\otimes n}$.  

\end{thm}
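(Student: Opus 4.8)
The plan is to establish, in order, that $\Lambda^{(n)}$ is a genuine channel, that it obeys the general permutationally symmetric identity, and that both the i.i.d.\ property~\eqref{eq:lambda_iid} and the identification with $\Lambda^{(n)}_{\mathrm{purify}}$ fall out as special cases. Complete positivity is immediate, since $\Lambda^{(n)}$ is the ampliation $X\mapsto X\otimes\id_{B^n}$ followed by conjugation with the single positive operator $\sqrt{R_n}$. For trace preservation I would first record that $\Tr_{B^n}\Gamma_{AB}^{\otimes n}=\id_{A^n}$ and that partial traces are invariant under conjugation by unitaries acting only on the traced-out system, whence $\Tr_{B^n}R_n=\id_{A^n}$; then $\Tr[\sqrt{R_n}(X\otimes\id_{B^n})\sqrt{R_n}]=\Tr[(X\otimes\id_{B^n})R_n]=\Tr[X\,\Tr_{B^n}R_n]=\Tr[X]$.

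For the general formula I would work with the canonical purification $\ket{\psi_\rho}_{A^nB^n}\coloneqq(\sqrt{\rho_{A^n}}\otimes\id_{B^n})\ket{\Gamma}_{A^nB^n}$, with $\ket{\Gamma}_{A^nB^n}=\ket{\Gamma}_{AB}^{\otimes n}$ under the usual regrouping of factors. A one-line check gives $\Tr_{B^n}\ketbra{\psi_\rho}=\rho_{A^n}$, so this is a purification; and since $(P_\pi\otimes P_\pi)\ket{\Gamma}=\ket{\Gamma}$ and $P_\pi\sqrt{\rho_{A^n}}P_\pi^\dagger=\sqrt{\rho_{A^n}}$ (permutational symmetry of $\rho_{A^n}$), the state $\ket{\psi_\rho}$ is permutationally symmetric. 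The core computation is then three steps: Lemma~\ref{lem:commute} gives $\Lambda^{(n)}(\rho_{A^n})=\sqrt{R_n}(\rho_{A^n}\otimes\id_{B^n})\sqrt{R_n}=(\sqrt{\rho_{A^n}}\otimes\id_{B^n})\,R_n\,(\sqrt{\rho_{A^n}}\otimes\id_{B^n})$; substituting~\eqref{eq:R_n} and using that $\sqrt{\rho_{A^n}}$ and $U_B^{\otimes n}$ act on different factors turns the integrand into $(\id_{A^n}\otimes U_B^{\otimes n})(\sqrt{\rho_{A^n}}\otimes\id_{B^n})\ketbra{\Gamma}(\sqrt{\rho_{A^n}}\otimes\id_{B^n})(\id_{A^n}\otimes U_B^{\otimes n})^\dagger$; and $(\sqrt{\rho_{A^n}}\otimes\id_{B^n})\ketbra{\Gamma}(\sqrt{\rho_{A^n}}\otimes\id_{B^n})=\ketbra{\psi_\rho}$ yields exactly $\EE{U_B}[(\id_{A^n}\otimes U_B^{\otimes n})\ketbra{\psi_\rho}(\id_{A^n}\otimes U_B^{\otimes n})^\dagger]$. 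Specialising to $\rho_{A^n}=\rho_A^{\otimes n}$ gives $\ket{\psi_\rho}=(\psi_\rho)_{AB}^{\otimes n}$ and recovers~\eqref{eq:lambda_iid}, i.e.\ the defining property~\eqref{eq:property}; as both maps are moreover permutation-covariant and the i.i.d.\ inputs span the permutationally symmetric operators, this identifies $\Lambda^{(n)}$ with $\Lambda^{(n)}_{\mathrm{purify}}$.

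The delicate point --- and the step I expect to be the main obstacle --- is that the formula must hold for \emph{any} fixed permutationally symmetric purification, not merely the canonical one. Two purifications of $\rho_{A^n}$ in $\mathcal{H}_A^{\otimes n}\otimes\mathcal{H}_B^{\otimes n}$ are related by $\ket{\psi'}=(\id_{A^n}\otimes W_{B^n})\ket{\psi_\rho}$ for some unitary $W_{B^n}$, and imposing permutational symmetry on both forces the restriction $\widetilde W\coloneqq W_{B^n}\Pi$ to the support $\Pi$ of the $B$-marginal to commute with every $P_\pi$, i.e.\ to be permutationally symmetric. I would then invoke Schur--Weyl duality: in the decomposition $\mathcal{H}_B^{\otimes n}\cong\bigoplus_\lambda \mathcal Q_\lambda\otimes\mathcal P_\lambda$, permutationally symmetric operators act only on the unitary-group factors $\mathcal Q_\lambda$, whereas the Haar twirl $Y\mapsto\EE{U_B}[U_B^{\otimes n}\,Y\,(U_B^\dagger)^{\otimes n}]$ replaces each $\mathcal Q_\lambda$-block by its normalised trace. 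Since the partial trace over $\mathcal Q_\lambda$ is invariant under conjugation by an isometry acting on $\mathcal Q_\lambda$, conjugating the $B$-system by $\widetilde W$ is invisible to the twirl, so the average is unchanged; this is precisely the independence of the choice of $(\psi_\rho)_{A^nB^n}$. The phrase ``each differing only by a tensor-product unitary'' then describes the mixture itself: its members $(\id_{A^n}\otimes U_B^{\otimes n})\ket{\psi_\rho}$ are permutationally symmetric purifications that differ pairwise by the tensor-power unitaries $U_B^{\otimes n}$.
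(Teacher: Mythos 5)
Your proposal is correct and follows essentially the same route as the paper: complete positivity is immediate, trace preservation follows from $\Tr_{B^n}R_n=\id_{A^n}$ (the paper phrases this via cyclicity of the trace and $\Tr_B\Gamma_{AB}=\id_A$), and the core identity is obtained by pushing $\sqrt{R_n}$ onto $\rho_{A^n}$ via Lemma~\ref{lem:commute}, substituting~\eqref{eq:R_n}, and recognising the standard purification $\sqrt{\rho_{A^n}}\,\Gamma_{AB}^{\otimes n}\sqrt{\rho_{A^n}}$; the identification with $\Lambda^{(n)}_{\mathrm{purify}}$ via permutation covariance plus the spanning property of i.i.d.\ states is also exactly the paper's argument. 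The one place where you go beyond the paper is your final paragraph: the paper proves the symmetric-input formula only for the standard purification, and it justifies independence of the choice of purification only for i.i.d.\ inputs (Remark~\ref{rem:invariance}, where the relating unitary is a single $V_B$ and Haar right-invariance suffices), whereas for a general permutationally symmetric input the relating unitary $W_{B^n}$ need not be a tensor power. Your Schur--Weyl argument --- that $W_{B^n}\Pi$ lies in the commutant of the $S_n$-action, hence acts only on the $\mathcal{Q}_\lambda$ factors, which the twirl $Y\mapsto\mathds{E}_{U_B}\big[U_B^{\otimes n}\,Y\,(U_B^\dagger)^{\otimes n}\big]$ traces out --- correctly fills this small gap and makes the ``any fixed permutationally symmetric purification'' clause of the statement fully rigorous.
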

\end{boxedthm}

\begin{proof}
    The map $\Lambda^{(n)}$ is manifestly completely positive. It is also trace preserving, as, by the very definitions of $\Lambda^{(n)}$ and $R_n$, we have
    \bb
    \Tr\left[\Lambda^{(n)}(X_{A^n})\right] &\eqt{(i)} \Tr\big[\big(X_{A^n}\,\otimes\,\id_{B^n}\big)R_n\big]\\
    &=\Tr_{A^n}\!\left[ X_{A^n} \EE{{U}_B}\!\left[\Tr_{B^n}\!\big[
            (\id_{A^n} \otimes U_B^{\otimes n})\,
            \Gamma_{AB}^{\otimes n}\,
            (\id_{A^n} \otimes (U_B^\dagger)^{\otimes n})\big]
        \right]\right]\\
        &\eqt{(ii)}\Tr_{A^n}\!\big[ X_{A^n}\Tr_{B^n}\!\big[
            \Gamma_{AB}^{\otimes n}\big]\big] \\
        &\eqt{(iii)} \Tr_{A^n}\!\big[ X_{A^n}\big]\, ,
    \ee
    where in~(i) and~(ii) we have leveraged the cyclicity of the trace, and in~(iii) we have observed that $\Tr_B \Gamma_{AB} = \id_A$. Now, if $\rho_{A^n}$ is a permutationally symmetric state on 
$\mathcal{H}_A^{\otimes n}$, we immediately see that
    \bb
        \Lambda^{(n)}\big(\rho_{A^{n}}\big)&=\sqrt{R_n}\rho_{A^{n}}\sqrt{R_n}\\
        &\eqt{(iv)}\sqrt{\rho_{A^{n}}}R_n \sqrt{\rho_{A^{n}}} \\
        &=\EE{{U}_B}\!\left[
            (\id_{A^n} \otimes U_B^{\otimes n})\,
            \sqrt{\rho_{A^{n}}}\,\Gamma_{AB}^{\otimes n}\sqrt{\rho_{A^{n}}}\,
            (\id_{A^n} \otimes (U_B^\dagger)^{\otimes n})
        \right]\\
        &\eqt{(v)} \EE{{U}_B}\!\left[
            (\id_A \otimes U_B)^{\otimes n}\,
            (\psi_\rho^{\rm \,std})_{A^nB^n}\,
            (\id_A \otimes U_B^\dagger)^{\otimes n}
        \right],
    \ee
    where in~(iv) we have used Lemma~\ref{lem:commute}, and in~(v) we have noticed that 
    \bb
    (\psi_\rho^{\rm \,std})_{A^nB^n}\coloneqq \sqrt{\rho_{A^n}}\,\Gamma_{AB}^{\otimes n}\sqrt{\rho_{A^n}}
    \ee
    is the standard purification of $\rho_{A^n}$ on $\mathcal{H}_A^{\otimes n}\otimes \mathcal{H}_B^{\otimes n}$. In particular, if we take $\rho_{A^n}=\rho_A^{\otimes n}$, we immediately get~\eqref{eq:lambda_iid}.
    
    It remains to prove that $\Lambda^{(n)}$ is exactly equal to the map $\Lambda^{(n)}_{\mathrm{purify}}$ of~\cite{tang2025,pelecanos2025} (even for input states that are not of the form $\rho^{\otimes n}$). From the construction in~\cite{tang2025,pelecanos2025}, it easily follows that $\Lambda^{(n)}_{\mathrm{purify}}(U_\pi (\cdot )U_\pi^\dagger)=\Lambda^{(n)}_{\mathrm{purify}}(\cdot)$ for all $\pi\in S_n$. In addition, since $R_n \big((U_\pi)_{A_n}\otimes (U_\pi)_{B_n}\big) = R_n$, as $R_n$ is supported on the fully symmetric subspace of $\mathcal{H}_{AB}^{\otimes n}$, the same property is satisfied by our channel: $\Lambda^{(n)}\big(U_\pi^{\vphantom{\dag}} (\cdot ) U_\pi^\dag\big)=\Lambda^{(n)}(\cdot)$  for all $\pi\in S_n$. Consequently, both $\Lambda^{(n)}$ and $\Lambda^{(n)}_{\mathrm{purify}}$ are completely characterised by their action on the space $H_{d,n}^{\rm sym}$ of permutationally symmetric Hermitian operators on $\mathcal{H}_A^{\otimes n}$. Hence, by exploiting that (a) both $\Lambda^{(n)}$ and $\Lambda^{(n)}_{\mathrm{purify}}$ satisfy~\eqref{eq:property} and (b) the set $\{\rho^{\otimes n}: \rho\in\mathcal{D}(\mathcal{H}_A)\}$ spans $H_{d,n}^{\rm sym}$ (see Lemma~\ref{fn:ll} below), it follows that $\Lambda^{(n)}=\Lambda^{(n)}_{\mathrm{purify}}$.
\end{proof}

\begin{lemma}\label{fn:ll}
    The set $\{\rho^{\otimes n}: \rho\in\mathcal{D}(\mathcal{H}_A)\}$ spans the space of permutationally symmetric operators over $\mathcal{H}_A^{\otimes n}$.
\end{lemma}

\begin{proof}
It suffices to prove that any permutationally symmetric state $\rho_{A^n}$ is spanned by $\{\rho^{\otimes n}: \rho\in\mathcal{D}(\mathcal{H}_A)\}$. This can be seen by noticing that the standard purification $\Psi_{A^nB^n}$ of a permutationally symmetric state $\rho_{A^n}$ belongs to the symmetric subspace of $\mathcal{H}_A^{\otimes n}\otimes\mathcal{H}_B^{\otimes n}$. By~\cite[Eq.~(11b)]{harrow2013churchsymmetricsubspace}, we can write $\Psi_{A^nB^n}=\sum_i\lambda_i(\psi_i)_{AB}^{\otimes n}$, so that the claim follows by tracing out the system $B^n$.
\end{proof}

\begin{rem}[(Fixed-rank extension)]
The above construction reproduces the results of~\cite{tang2025,pelecanos2025} when no constrained is imposed on the rank. In Appendix~\ref{app:finite-rank-rpc}, instead, we show how to extend our construction to the setting where the rank of the input state is guaranteed to be at most $r$, where $r\leq d$ is known. In this case, we recover the result of~\cite{tang2025,pelecanos2025} that an $r$-dimensional purifying system $B$ suffices (Theorem~\ref{thm:finite-rank-rpc}).
\end{rem}

\section{Uhlmann's theorem for divergences (in one line)}

Throughout this section, we detail an application of the random purification channel to quantum Shannon theory. Namely, we use this channel to give simpler proofs of the main results of~\cite{Mazzola_2025, Fang2025-variational}, which we also extend to a broader class of quantum divergences.

\subsection{Divergences and weak quasi-concavity}

\begin{Def}[(Divergence)]
    A function $\mathbb{D}:\mathcal{D}(\mathcal{H})\times\mathcal{D}(\mathcal{H})\rightarrow \mathbb{R}\cup\{+\infty\}$ is said to be a \emph{divergence} if it satisfies the following property, known as \emph{data-processing inequality}: for every quantum channel $\Lambda$ and every pair of states $(\rho,\sigma)$, we have
\bb
    \mathbb{D}\big(\Lambda(\rho)\big\|\Lambda(\sigma)\big)\leq \mathbb{D}(\rho\|\sigma).
\ee
\end{Def}

The most famous example of a divergence is the \emph{Umegaki relative entropy}~\cite{Umegaki1962}, defined by
\begin{equation} \label{eq:umegaki}
    D(\rho\|\sigma)
    \coloneqq
    \begin{cases}
            +\infty &  \supp(\rho)\nsubseteq \supp(\sigma),\\
            \Tr\!\left[\rho\bigl(\log\rho - \log\sigma\bigr)\right] & \text{otherwise.}
        \end{cases}
\end{equation}
It is well known that the Umegaki relative entropy is \emph{weak concave}, meaning that, for any ensemble of states $\{(p_i,\rho_i)\}_i$, we have
\begin{equation}\label{eq_quasi_conc}
    D\!\left(\sum_i p_i \rho_i \,\middle\|\, \sigma\right)
    \ge \sum_i p_i\, D(\rho_i\|\sigma) + \sum_i p_i \log p_i\,.
\end{equation}
For our analysis of Uhlmann’s theorem for divergences, we will require only a weaker assumption, which we refer to as \emph{weak quasi-concavity}.

\begin{Def}[(Weak quasi-concavity)] \label{def:weak_conc}
    Let $\mathcal{H}$ be a Hilbert space and let $d$ be its dimension. We say that a divergence $\mathbb{D}(\,\cdot\,\|\,\cdot\,)$ satisfies the \emph{weak quasi-concavity} property if there exists a polynomial $P_d$ such that, for any $n\geq 1$, for any finite ensemble of states $\{(p_i,\rho_i)\}_{i=1,\dots, N}$ on $\mathcal{H}^{\otimes n}$ and any $\sigma\in\mathcal{D}(\mathcal{H}^{\otimes n})$, we have
\bb\label{eq:weak_conc}
    \mathbb{D}\left(\sum_{i=1}^N p_i\rho_i\,\middle\|\,\sigma\right)\geq \min_{1\leq i\leq N}\mathbb{D}(\rho_i\|\sigma)-\log P_d(N, s_\sigma),
\ee
    where $s_{\sigma}\coloneqq|{\rm spec}(\sigma)|$.
\end{Def}

\begin{rem}[(Quasi-concavity implies weak quasi-concavity)]\label{rem:8}
    The requirement~\eqref{eq:weak_conc} is weaker than the one in~\eqref{eq_quasi_conc}, as it immediately follows from the fact that $\sum_{i=1}^Np_i\mathbb{D}(\rho_i\|\sigma)\geq \displaystyle{\min_{1\leq i\leq N}}\mathbb{D}(\rho_i\|\sigma)$ and $\sum_{i=1}^Np_i\log p_i\ge -\log N$.
\end{rem}
Other than the Umegaki relative entropy, we can show that three other divergences satisfy weak quasi-concavity: the sandwiched R\'enyi divergences, the measured R\'enyi divergences, defined as follows, and the \# R\'enyi divergences~\cite{Fawzi_2021_defining}.

\begin{Def}[{(Sandwiched R\'enyi divergences~\cite{tomamichel12smooth_tutorial,newRenyi,Wilde2014})}]
    Let $\rho$ and $\sigma$ be states in $\mathcal{D}(\mathcal{H})$, and let $\alpha\in(0,1)\cup(1,\infty)$. Then we set
    \bb
    \tilde Q_\alpha(\rho\|\sigma) \coloneqq \Tr\left[\left(\sigma^{\frac{1-\alpha}{2\alpha}}\rho \sigma^{\frac{1-\alpha}{2\alpha}}\right)^{\alpha}\right] ,
    \ee
    where it is understood that $Q_\alpha(\rho\|\sigma) = +\infty$ if either $\alpha\in (1,\infty)$ and $\supp(\rho)\nsubseteq \supp(\sigma)$, or $\alpha\in (0,1)$ and $\supp(\rho) \perp \supp(\sigma)$. The \emph{sandwiched R\'enyi divergence} of order $\alpha\in (0,\infty]$ is then defined as
    \bb
    \tilde D_{\alpha}(\rho\|\sigma)\coloneqq\frac{1}{\alpha-1}\log \tilde Q_\alpha(\rho\|\sigma).
    \ee
    The limiting cases $\alpha=1$ and $\alpha=\infty$ are obtained by continuity: at $\alpha=1$ we have $\tilde{D}_1(\rho\|\sigma) \coloneqq \lim_{\alpha\to 1} \tilde{D}_\alpha(\rho\|\sigma) = D(\rho\|\sigma)$, where the Umegaki relative entropy is given by~\eqref{eq:umegaki}; at $\alpha=\infty$, instead, we have $\tilde D_\infty(\rho\|\sigma) \coloneqq \lim_{\alpha\to\infty} \tilde D_\alpha(\rho\|\sigma) = \log \left\|\sigma^{-1/2} \rho \sigma^{-1/2}\right\|_\infty = D_{\max}(\rho\|\sigma)$, also called the max-relative entropy~\cite{Datta08}. Again, it is understood that $D_{\max}(\rho\|\sigma) = +\infty$ if $\supp(\rho)\not\subseteq \supp(\sigma)$.
\end{Def}

\begin{Def}[{(Measured R\'enyi divergences~\cite{newRenyi,Wilde2014})}]
    Let $\rho$ and $\sigma$ be states in $\mathcal{D}(\mathcal{H})$, and let $\alpha\in(0,1)\cup(1,\infty)$. The \emph{measured R\'enyi divergence} of order $\alpha\in (0,\infty]$ can be formally defined as
    \bb
        D_{M,\alpha}(\rho\|\sigma)\coloneqq \sup_{\pazocal{M}} D_\alpha\big(\pazocal{M}(\rho)\big\|\pazocal{M}(\sigma)\big),
    \ee
    where $\pazocal{M}$ is any arbitrary measurement channel and $D_\alpha(P\|Q) \coloneqq \frac{1}{\alpha-1} \log \sum_x P(x)^\alpha Q(x)^{1-\alpha}$ is the classical R\'enyi divergence of order $\alpha$ (limiting cases are treated as before).
\end{Def}

\begin{prop}[(Weak quasi-concavity for many divergences)]\label{prop:weak} The following divergences are weakly quasi-concave:
\begin{itemize}
    \item Umegaki relative entropy $D$;
    \item sandwiched R\'enyi divergences $\tilde D_\alpha$ of order $\alpha\in(0,\infty)$;
    \item measured R\'enyi divergences $D_{M,\alpha}$ of order $\alpha\in(0,\infty)$;
    \item \# R\'enyi divergences $D^{\#}_{\alpha}$ of order $\alpha\in(1,\infty)$~\cite{Fawzi_2021_defining}.
\end{itemize}
\end{prop}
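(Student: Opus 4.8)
The plan is to reduce everything to a single elementary observation about the positive-semidefinite order, sidestepping representation theory and (wherever possible) pinching. Write $\bar\rho\coloneqq\sum_{i=1}^N p_i\rho_i$ and let $j^\ast$ be an index of maximal weight, so that $p_{j^\ast}\ge 1/N$. Since all $\rho_i\ge 0$, discarding every term but the $j^\ast$-th yields the operator inequality
\bb
    \bar\rho\ \ge\ p_{j^\ast}\rho_{j^\ast}\ \ge\ \tfrac1N\,\rho_{j^\ast}.
\ee
The Umegaki relative entropy needs nothing further: being quasi-concave, it is weakly quasi-concave by Remark~\ref{rem:8}. For the three R\'enyi families I would treat $\alpha>1$ and $\alpha\in(0,1)$ separately, handling the former uniformly (this is also why the $\#$ divergence, which is listed only for $\alpha\in(1,\infty)$, fits in).

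For $\alpha>1$ I write each divergence as $\tfrac{1}{\alpha-1}\log Q_\bullet(\rho\|\sigma)$ with $Q_\bullet\in\{\tilde Q_\alpha,\,Q_{M,\alpha},\,Q^{\#}_\alpha\}$ the corresponding R\'enyi kernel (here $Q_{M,\alpha}=\sup_{\pazocal M}\sum_x[\pazocal M(\rho)](x)^\alpha[\pazocal M(\sigma)](x)^{1-\alpha}$). The key point is that each $Q_\bullet(\,\cdot\,\|\sigma)$ is (i) monotone non-decreasing in its first argument under the PSD order and (ii) positively homogeneous of degree $\alpha$. Granting (i)--(ii), the operator inequality gives in one line $Q_\bullet(\bar\rho\|\sigma)\ge Q_\bullet\big(\tfrac1N\rho_{j^\ast}\big\|\sigma\big)=N^{-\alpha}Q_\bullet(\rho_{j^\ast}\|\sigma)$, whence
\bb
    \mathbb D(\bar\rho\|\sigma)\ \ge\ \mathbb D(\rho_{j^\ast}\|\sigma)-\tfrac{\alpha}{\alpha-1}\log N\ \ge\ \min_i\mathbb D(\rho_i\|\sigma)-\tfrac{\alpha}{\alpha-1}\log N,
\ee
i.e.\ weak quasi-concavity with $P_d(N,s_\sigma)=N^{\alpha/(\alpha-1)}$ and \emph{no} dependence on $s_\sigma$. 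It then remains to verify (i)--(ii): homogeneity is immediate in all three cases, and monotonicity holds for $\tilde Q_\alpha$ because (writing $c\coloneqq\tfrac{1-\alpha}{2\alpha}$, so $\tilde Q_\alpha(\rho\|\sigma)=\Tr[(\sigma^{c}\rho\,\sigma^{c})^\alpha]$) congruence preserves the PSD order and $\Tr[\,\cdot\,^\alpha]$ is Weyl-monotone; for $Q_{M,\alpha}$ because a supremum of first-argument-monotone classical kernels is monotone; and for $Q^{\#}_\alpha$ from its convex-optimization definition~\cite{Fawzi_2021_defining}.

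For $\alpha\in(0,1)$ the divergence is \emph{decreasing} in its kernel, so I instead need an upper bound on $Q_\bullet(\bar\rho\|\sigma)$. For $\tilde Q_\alpha$ this is direct from the trace inequality $\Tr[(A+B)^\alpha]\le\Tr[A^\alpha]+\Tr[B^\alpha]$ ($A,B\ge 0$, $0<\alpha\le 1$): applied to $A_i=\sigma^{c}(p_i\rho_i)\sigma^{c}$ it gives $\tilde Q_\alpha(\bar\rho\|\sigma)\le\sum_i p_i^\alpha\tilde Q_\alpha(\rho_i\|\sigma)\le N\max_i\tilde Q_\alpha(\rho_i\|\sigma)$, and since $\tfrac{1}{\alpha-1}<0$ this yields $\tilde D_\alpha(\bar\rho\|\sigma)\ge\min_i\tilde D_\alpha(\rho_i\|\sigma)-\tfrac{\log N}{1-\alpha}$, again with no $s_\sigma$. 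The measured divergence is the delicate case, since its defining supremum over measurements no longer aligns with the ensemble index. Here I route through the single measurement fixed by $\sigma$, namely the pinching $\mathcal P_\sigma$ onto the eigenspaces of $\sigma$, combining that pinching is one admissible measurement, the classical bound on the commuting operators $\mathcal P_\sigma(\rho_i),\sigma$, and the pinching inequality $D_{M,\alpha}(\rho\|\sigma)\le D_\alpha(\mathcal P_\sigma(\rho)\|\sigma)+\log s_\sigma$:
\bb
    D_{M,\alpha}(\bar\rho\|\sigma)
    &\ge D_\alpha\big(\mathcal P_\sigma(\bar\rho)\big\|\sigma\big)
    = D_\alpha\Big(\textstyle\sum_i p_i\,\mathcal P_\sigma(\rho_i)\,\Big\|\,\sigma\Big)\\
    &\ge \min_i D_\alpha\big(\mathcal P_\sigma(\rho_i)\big\|\sigma\big)-\tfrac{\log N}{1-\alpha}
    \ \ge\ \min_i D_{M,\alpha}(\rho_i\|\sigma)-\log s_\sigma-\tfrac{\log N}{1-\alpha},
\ee
now genuinely using $s_\sigma$.

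The step I expect to be most delicate is exactly this last one. The supremum in the definition of $D_{M,\alpha}$ means that a measurement optimal for one ensemble member can render another member nearly indistinguishable from $\sigma$, so there is no component-wise control and the naive $\sup$--$\min$ exchange fails in the right direction; replacing the free supremum by the \emph{fixed}, $\sigma$-determined pinching measurement is what restores the alignment, at the price of the $\log s_\sigma$ pinching factor (and this requires the Hayashi pinching inequality to hold throughout the stated range of $\alpha$). The only other point needing care is the verification of monotonicity-in-$\rho$ for the $\#$-kernel $Q^{\#}_\alpha$ from its variational definition; this is a structural check rather than a true obstacle, and it is precisely what makes the one-line $\alpha>1$ argument applicable to $D^{\#}_\alpha$, for which the pinching route is in fact \emph{unavailable}, since $D^{\#}_\alpha$ can strictly exceed the sandwiched divergence yet is additive.
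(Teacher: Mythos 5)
Your treatment of the Umegaki case and of the sandwiched divergences coincides with the paper's (same decomposition $\bar\rho\ge p_{j^\ast}\rho_{j^\ast}\ge\frac1N\rho_{j^\ast}$ for $\alpha>1$; same subadditivity bound $\tilde Q_\alpha(\bar\rho\|\sigma)\le\sum_i p_i^\alpha\tilde Q_\alpha(\rho_i\|\sigma)$ for $\alpha\in(0,1)$, which you derive directly rather than citing it). Where you genuinely depart from the paper is in the $\alpha>1$ treatment of $D_{M,\alpha}$ and $D^{\#}_\alpha$: observing that the kernels $Q_{M,\alpha}$ and $Q^{\#}_\alpha$ are PSD-monotone and degree-$\alpha$ homogeneous in the first argument lets you run the same one-line argument as for $\tilde Q_\alpha$, with penalty $\frac{\alpha}{\alpha-1}\log N$ and \emph{no} dependence on $s_\sigma$. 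This is cleaner than the paper, which for these two divergences detours through the pinching map (Lemma~\ref{lem:marco}) and through the two-sided comparison between $D^{\#}_\alpha$ and $D_{M,\alpha}$, paying an extra $\log s_\sigma$. Both routes of course land within the weak quasi-concavity definition.

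There is, however, a genuine gap in your measured case for $\alpha\in(0,1)$. The final step of your chain needs, for each $i$, the inequality $D_{M,\alpha}(\rho_i\|\sigma)\le D_\alpha\big(\pazocal{P}_\sigma(\rho_i)\big\|\sigma\big)+\log s_\sigma$, and you justify it by invoking ``the Hayashi pinching inequality.'' Hayashi's inequality is the operator statement $\rho\le s_\sigma\,\pazocal{P}_\sigma(\rho)$, and it converts into the divergence inequality you want only when the prefactor $\frac{1}{\alpha-1}$ is \emph{positive}: for $\alpha\in(0,1)$, feeding $\pazocal{M}(\rho)\le s_\sigma\,\pazocal{M}(\pazocal{P}_\sigma(\rho))$ into $Q_\alpha$ gives $Q_\alpha(\pazocal{M}(\rho)\|\pazocal{M}(\sigma))\le s_\sigma^\alpha\,Q_\alpha(\pazocal{M}(\pazocal{P}_\sigma(\rho))\|\pazocal{M}(\sigma))$, and dividing by the \emph{negative} $\alpha-1$ flips the inequality, yielding a \emph{lower} bound on $D_\alpha(\pazocal{M}(\rho)\|\pazocal{M}(\sigma))$ --- the wrong direction. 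The inequality you need is true, but its proof is not a direct application of Hayashi: one must first pass to an auxiliary quantum divergence satisfying data-processing under measurements (the sandwiched one for $\alpha\ge1/2$, or the Petz one for all $\alpha\in(0,1)$) to get $D_{M,\alpha}\le\tilde D_\alpha$ (resp.\ $\le\widebar{D}_\alpha$), and then apply the corresponding pinching lemma --- which for $\alpha<1$ rests on the subadditivity $\Tr[(A+B)^\alpha]\le\Tr[A^\alpha]+\Tr[B^\alpha]$ applied to the pinching written as a uniform mixture of $s_\sigma$ unitary conjugations, not on the operator inequality $\rho\le s_\sigma\pazocal{P}_\sigma(\rho)$. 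This is exactly the content of the paper's Lemma~\ref{lem:marco} combined with $D_{M,\alpha}\le\tilde D_\alpha$; once you import that pair of facts (and restrict or adapt the argument accordingly for $\alpha<1/2$), your chain closes. As written, though, the justification of this one step is incorrect.
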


\begin{proof} See Appendix~\ref{ap:weak}.
\end{proof}

\begin{lemma}\label{lem:quasi_conc} A weakly quasi-concave divergence $\mathbb{D}$ satisfies
\bb
    \mathbb{D}\left(\mathbb{E}_{\rho\sim \nu}\rho^{\otimes n}\,\middle\|\,\sigma^{(n)}\right)\geq \min_{\rho\in\supp(\nu)}\mathbb{D}\big(\rho^{\otimes n}\big\|\sigma^{(n)}\big)-\log {\rm poly}_d(n)
\ee
for any arbitrary probability measure $\nu$ on $\mathcal{D}(\mathcal{H})$ and any permutationally symmetric state $\sigma^{(n)}\in\mathcal{D}(\mathcal{H})$, where \mbox{$d=\dim\mathcal{H}$}. 
\end{lemma}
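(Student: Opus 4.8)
The plan is to reduce the continuous average to a finite convex combination and then invoke weak quasi-concavity. I would work in the real vector space $H^{\mathrm{sym}}_{d,n}$ of permutationally symmetric Hermitian operators on $\mathcal{H}^{\otimes n}$: every $\rho^{\otimes n}$ lies in $H^{\mathrm{sym}}_{d,n}$, and so does the barycenter $\bar\rho_n\coloneqq\mathbb{E}_{\rho\sim\nu}\rho^{\otimes n}$. The first key point is that this space is \emph{low-dimensional}. Identifying operators on $\mathcal{H}^{\otimes n}$ with $\mathcal{L}(\mathcal{H})^{\otimes n}$, the conjugation $X\mapsto P_\pi X P_\pi^\dagger$ permutes the tensor factors of $\mathcal{L}(\mathcal{H})$, so $X$ is permutationally symmetric precisely when it lies in the symmetric subspace $\mathrm{Sym}^n\big(\mathcal{L}(\mathcal{H})\big)$; as $\dim\mathcal{L}(\mathcal{H})=d^2$, this forces
\[
  \dim_{\mathbb{R}} H^{\mathrm{sym}}_{d,n}\ \le\ \binom{d^2+n-1}{n}\ =\ {\rm poly}_d(n).
\]

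First I would carry out the finite reduction by a Carathéodory argument. Let $\mathcal{C}\coloneqq\{\rho^{\otimes n}:\rho\in\supp(\nu)\}$, a compact subset of $H^{\mathrm{sym}}_{d,n}$, being the continuous image of the compact set $\supp(\nu)$. The barycenter $\bar\rho_n$ lies in $\co(\mathcal{C})$: otherwise a separating real-linear functional $\phi$ with $\phi(\bar\rho_n)>\sup_{\mathcal{C}}\phi$ would exist, contradicting $\phi(\bar\rho_n)=\mathbb{E}_{\rho\sim\nu}\phi(\rho^{\otimes n})\le\sup_{\mathcal{C}}\phi$. Applying Carathéodory's theorem inside $H^{\mathrm{sym}}_{d,n}$, I would then write $\bar\rho_n=\sum_{i=1}^N p_i\,\rho_i^{\otimes n}$ as a convex combination with $\rho_i\in\supp(\nu)$ and $N\le \dim_{\mathbb{R}} H^{\mathrm{sym}}_{d,n}+1={\rm poly}_d(n)$.

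Next I would bound the spectral complexity of the reference state. Because $\sigma^{(n)}$ is permutationally symmetric, Schur--Weyl duality writes it as $\bigoplus_{\lambda}M_\lambda\otimes\id_{\mathcal{P}_\lambda}$, the sum running over Young diagrams $\lambda$ with $n$ boxes and at most $d$ rows; hence its number of distinct eigenvalues obeys $s_{\sigma^{(n)}}\le\sum_\lambda\dim\mathcal{Q}_\lambda$. The number of such $\lambda$ is $O\big(n^{d-1}\big)$, and each Weyl dimension satisfies $\dim\mathcal{Q}_\lambda=O\big(n^{\binom{d}{2}}\big)$ by the Weyl dimension formula, so $s_{\sigma^{(n)}}={\rm poly}_d(n)$ as well.

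Finally I would assemble the pieces. Applying weak quasi-concavity to the finite ensemble $\{(p_i,\rho_i^{\otimes n})\}_{i=1}^N$ against $\sigma^{(n)}$ gives
\[
  \mathbb{D}\big(\bar\rho_n\big\|\sigma^{(n)}\big)\ \ge\ \min_{1\le i\le N}\mathbb{D}\big(\rho_i^{\otimes n}\big\|\sigma^{(n)}\big)-\log P_d\big(N,s_{\sigma^{(n)}}\big).
\]
Since every $\rho_i\in\supp(\nu)$, the minimum over $i$ is at least $\min_{\rho\in\supp(\nu)}\mathbb{D}(\rho^{\otimes n}\|\sigma^{(n)})$; and since $N$ and $s_{\sigma^{(n)}}$ are both ${\rm poly}_d(n)$ while $P_d$ is a fixed polynomial, $\log P_d(N,s_{\sigma^{(n)}})=\log{\rm poly}_d(n)$, which is the claim. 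The main obstacle is the finite reduction: one must guarantee simultaneously that the representing states $\rho_i$ genuinely lie in $\supp(\nu)$ (so the right-hand side is the correct minimum) and that their number $N$ is polynomial in $n$ --- both of which hinge on the polynomial dimension of $H^{\mathrm{sym}}_{d,n}$. The polynomial eigenvalue count for $\sigma^{(n)}$ is the other quantitative input, and it is exactly here that permutational symmetry of the reference state is indispensable.
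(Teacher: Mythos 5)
Your proof is correct and follows essentially the same route as the paper's: Carath\'eodory's theorem in the polynomially-dimensional space of permutationally symmetric Hermitian operators reduces the barycentre to a convex combination of ${\rm poly}_d(n)$ states of the form $\rho^{\otimes n}$ with $\rho\in\supp(\nu)$, the permutation symmetry of $\sigma^{(n)}$ gives $s_{\sigma^{(n)}}={\rm poly}_d(n)$, and weak quasi-concavity then yields the claim. The only (harmless) differences are that you bound $\dim H^{\rm sym}_{d,n}$ directly via the symmetric subspace of $\mathcal{L}(\mathcal{H})^{\otimes n}$ rather than via the Schur--Weyl decomposition, and that you spell out the separating-hyperplane argument showing the barycentre lies in the convex hull, a step the paper leaves implicit.
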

\begin{proof}
    See Appendix~\ref{ap:quasi_conc}.
\end{proof}

Remarkably, any divergence $\mathbb{D}_\alpha$ with the following two properties turns out to be asymptotically equivalent to the sandwiched R\'enyi divergence of order $\alpha$:
\begin{itemize}
    \item $\mathbb{D}_\alpha$ reduces to the classical R\'enyi divergence of order $\alpha$ for commuting states,
    \item $\mathbb{D}_\alpha$ satisfies weak quasi-concavity.
\end{itemize}
This follows from~\cite[Proposition~4.12]{TOMAMICHEL} together with the observation that the pinching map $\pazocal{P}_\sigma$ can be implemented with a random unitary taken from an ensemble whose cardinality is exactly $|\spec(\sigma)|$ (see e.g.~\cite[Eq.~(2.59)]{TOMAMICHEL}).

\subsection{Uhlmann's theorem for divergences}

\begin{Def}
    Let $\sigma_A\in\mathcal{D}(\mathcal{H}_A)$ be a state and let $\mathcal{H}_B$ be a Hilbert space isomorphic to $\mathcal{H}_A$. Then, we define the set $\pazocal{C}_{AB}^{\sigma_A}$ of $B$\emph{-extensions} of $\sigma_A$ as
    \bb
        \pazocal{C}_{AB}^{\sigma_A}=\left\{\tilde \sigma_{AB}\in\mathcal{D}(\mathcal{H}_A\otimes\mathcal{H}_B)\,:\, \Tr_B\tilde\sigma_{AB}=\sigma_A\right\},
    \ee
    and the family $\mathcal{C}_{AB}^{\sigma_A}$ as the sequence $\mathcal{C}_{AB}^{\sigma_A}\coloneqq \left(\pazocal{C}_{A^nB^n}^{\sigma_A^{\otimes n}}\right)_{n\geq 1}$.\end{Def}
 According to standard conventions, we define
\bb
\mathbb{D}^\infty(\rho\|\sigma)&\coloneqq \liminf_{n\to\infty}\frac{1}{n}\mathbb{D}(\rho^{\otimes n}\|\sigma^{\otimes n}), \\
\mathbb{D}^\infty(\rho\|\mathcal{F})&\coloneqq \liminf_{n\to\infty}\frac{1}{n}\inf_{\sigma_n\in \pazocal{F}_n}\mathbb{D}(\rho^{\otimes n}\|\sigma_n),
\ee
where $\mathcal{F}$ is the sequence of families of states $\pazocal{F}_n\subseteq \mathcal{D}(\mathcal{H})$. Incidentally, since most useful quantum divergences are either subadditive or superadditive (or both), Fekete's lemma guarantees that for such divergences the above limit infimum is actually a standard limit.

\begin{boxedthm}{}
\begin{thm}[(Axiomatic Uhlmann's theorem)]\label{thm:Uhlmann}
Let $\mathbb{D}(\,\cdot\,\|\,\cdot\,)$ be a divergence that obeys weak quasi-concavity (Definition~\ref{def:weak_conc}). Then, given $\rho_A$ and $\sigma_A$ in $\mathcal{D}(\mathcal{H}_A)$, for any arbitrary extension $\rho_{AB}$ of $\rho_A$ we have
\bb\label{eq:Uhlmann}
    \mathbb{D}^{\infty}(\rho_A\|\sigma_A)= \mathbb{D}^\infty \left(\rho_{AB}\middle\|\, \mathcal{C}^{\sigma_A}_{AB} \right)\,.
\ee
Moreover, a sequence of optimisers $(\tilde \sigma_{A^nB^n})_n\in \mathcal{C}^{\sigma_A}_{AB}$ is given by
\bb\label{eq:almost_optimizers}
    \tilde \sigma_{A^nB^n}=\pazocal{E}^{\otimes n}\circ\Lambda_{\rm purify}^{(n)}\left(\sigma_A^{\otimes n}\right)\,,
\ee
where $\pazocal{E}$ is any channel that, by acting only on the purifying system, maps a fixed purification of $\rho_A$ to the extension $\rho_{AB}$.
\end{thm}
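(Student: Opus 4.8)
The plan is to prove the two inequalities in~\eqref{eq:Uhlmann} separately, reserving weak quasi-concavity for the achievability half only. The lower bound $\mathbb{D}^{\infty}(\rho_{AB}\,\|\,\mathcal{C}^{\sigma_A}_{AB})\ge \mathbb{D}^{\infty}(\rho_A\|\sigma_A)$ is the easy half. For every extension $\tilde\sigma_{A^nB^n}$ of $\sigma_A^{\otimes n}$, the data-processing inequality under the partial trace $\Tr_{B^n}$ gives $\mathbb{D}(\rho_{AB}^{\otimes n}\,\|\,\tilde\sigma_{A^nB^n})\ge \mathbb{D}(\rho_A^{\otimes n}\|\sigma_A^{\otimes n})$, since $\Tr_{B^n}\rho_{AB}^{\otimes n}=\rho_A^{\otimes n}$ and $\Tr_{B^n}\tilde\sigma_{A^nB^n}=\sigma_A^{\otimes n}$. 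Taking the infimum over $\tilde\sigma_{A^nB^n}$, dividing by $n$ and letting $n\to\infty$ yields the claim; no quasi-concavity is needed.

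For the reverse (achievability) inequality I would exhibit the explicit candidate $\tilde\sigma_{A^nB^n}=\pazocal{E}^{\otimes n}\circ\Lambda^{(n)}_{\mathrm{purify}}(\sigma_A^{\otimes n})$ and show it is nearly optimal. First, since $\pazocal{E}$ acts only on the purifying system, $\Tr_{B^n}\tilde\sigma_{A^nB^n}=\Tr_{B^n}\Lambda^{(n)}_{\mathrm{purify}}(\sigma_A^{\otimes n})=\sigma_A^{\otimes n}$, so $\tilde\sigma_{A^nB^n}\in\pazocal{C}^{\sigma_A^{\otimes n}}_{A^nB^n}$. Second, because $\pazocal{E}$ maps the fixed purification $(\psi_\rho)_{AB}$ to $\rho_{AB}$, we have $\rho_{AB}^{\otimes n}=\pazocal{E}^{\otimes n}\big((\psi_\rho)_{AB}^{\otimes n}\big)$, and data processing under $\pazocal{E}^{\otimes n}$ collapses the problem to the purification level: $\mathbb{D}(\rho_{AB}^{\otimes n}\|\tilde\sigma_{A^nB^n})\le \mathbb{D}\big((\psi_\rho)_{AB}^{\otimes n}\,\|\,\Lambda^{(n)}_{\mathrm{purify}}(\sigma_A^{\otimes n})\big)$. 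It then remains to bound this purification-level divergence by $\mathbb{D}(\rho_A^{\otimes n}\|\sigma_A^{\otimes n})+\log\mathrm{poly}_d(n)$.

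This last bound is the heart of the argument, and it rests on two observations about $\Lambda^{(n)}_{\mathrm{purify}}$. The first is that $\Lambda^{(n)}_{\mathrm{purify}}(\sigma_A^{\otimes n})$ is permutationally symmetric and, crucially, invariant under conjugation by $\id_A\otimes U_B^{\otimes n}$ for every $U_B$ (left/right invariance of the Haar measure). Writing $\psi_\rho^U\coloneqq(\id_A\otimes U_B)(\psi_\rho)_{AB}(\id_A\otimes U_B^\dagger)$ for the rotated purification of $\rho_A$, this invariance together with the unitary invariance of any divergence gives $\mathbb{D}\big((\psi_\rho)^{\otimes n}\|\Lambda^{(n)}_{\mathrm{purify}}(\sigma_A^{\otimes n})\big)=\mathbb{D}\big((\psi_\rho^U)^{\otimes n}\|\Lambda^{(n)}_{\mathrm{purify}}(\sigma_A^{\otimes n})\big)$ for every $U_B$. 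Hence this common value equals the minimum on the right-hand side of the weak-quasi-concavity estimate of Lemma~\ref{lem:quasi_conc}, applied with the measure $\nu$ induced by $U_B\mapsto\psi_\rho^U$ and reference state $\Lambda^{(n)}_{\mathrm{purify}}(\sigma_A^{\otimes n})$. Since the Haar-average of $(\psi_\rho^U)^{\otimes n}$ is exactly $\Lambda^{(n)}_{\mathrm{purify}}(\rho_A^{\otimes n})$, the lemma yields $\mathbb{D}\big((\psi_\rho)^{\otimes n}\|\Lambda^{(n)}_{\mathrm{purify}}(\sigma_A^{\otimes n})\big)\le \mathbb{D}\big(\Lambda^{(n)}_{\mathrm{purify}}(\rho_A^{\otimes n})\,\|\,\Lambda^{(n)}_{\mathrm{purify}}(\sigma_A^{\otimes n})\big)+\log\mathrm{poly}_d(n)$.

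The second observation closes the loop: $\Lambda^{(n)}_{\mathrm{purify}}$ is one fixed channel that purifies \emph{every} i.i.d.\ input, so the divergence on the right compares the images of $\rho_A^{\otimes n}$ and $\sigma_A^{\otimes n}$ under the same channel, and data processing gives $\mathbb{D}\big(\Lambda^{(n)}_{\mathrm{purify}}(\rho_A^{\otimes n})\|\Lambda^{(n)}_{\mathrm{purify}}(\sigma_A^{\otimes n})\big)\le \mathbb{D}(\rho_A^{\otimes n}\|\sigma_A^{\otimes n})$. Chaining the three steps gives $\mathbb{D}(\rho_{AB}^{\otimes n}\|\tilde\sigma_{A^nB^n})\le \mathbb{D}(\rho_A^{\otimes n}\|\sigma_A^{\otimes n})+\log\mathrm{poly}_d(n)$; dividing by $n$ and sending $n\to\infty$, the polynomial correction washes out, proving $\mathbb{D}^{\infty}(\rho_{AB}\|\pazocal{C}^{\sigma_A}_{AB})\le \mathbb{D}^{\infty}(\rho_A\|\sigma_A)$ and certifying $\tilde\sigma_{A^nB^n}$ as an optimiser. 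I expect the main obstacle to be conceptual rather than computational: one must recognise that the single invariant channel $\Lambda^{(n)}_{\mathrm{purify}}$ plays a double role — as the Haar twirl whose invariance converts the second-argument average into the first-argument mixture demanded by weak quasi-concavity, and simultaneously as a genuine channel acting on both $\rho_A^{\otimes n}$ and $\sigma_A^{\otimes n}$, so that one data-processing step descends to $\mathbb{D}(\rho_A^{\otimes n}\|\sigma_A^{\otimes n})$. Seeing that these two uses are compatible is exactly what compresses the proof to a single line.
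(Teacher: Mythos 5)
Your proposal is correct and follows essentially the same route as the paper: the easy direction via partial-trace data processing, and the achievability direction via the chain (data processing under $\Lambda^{(n)}_{\rm purify}$) $\to$ (weak quasi-concavity through Lemma~\ref{lem:quasi_conc}) $\to$ (Haar/unitary invariance making all purifications equivalent) $\to$ (the Haar-averaged purification of $\sigma_A^{\otimes n}$ as the explicit member of $\pazocal{C}^{\sigma_A^{\otimes n}}_{A^nB^n}$), merely written in the reverse order from the paper's display~\eqref{eq:inequalities}. The reduction from a general extension $\rho_{AB}$ to a purification via $\pazocal{E}^{\otimes n}$ and data processing also matches the paper's argument.
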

\end{boxedthm}

A concise proof of the existence of the channel $\pazocal{E}$ for any extension $\rho_{AB}$ of $\rho_A$ can be found in \cite[Section~III]{squashed}. Essentially, it relies on the fact that a canonical purification of $\rho_A$ can always be transformed into a purification $\ket{\Psi}_{ABE}$ of $\rho_{AB}$ via a local isometry on the purifying system. Tracing away $E$ gives a Strinespring representation of the channel $\pazocal{E}$.

A remarkable property of the sequence of optimisers~\eqref{eq:almost_optimizers} is its \emph{universality}, i.e.\ the fact that it does not depend on the specific divergence $\mathbb{D}$. We also observe that, whenever $\mathbb{D}$ is additive, \eqref{eq:Uhlmann} simply becomes
\bb
    \mathbb{D}(\rho_A\|\sigma_A)= \mathbb{D}^\infty \left(\rho_{AB}\middle\|\, \pazocal{C}^{\sigma_A}_{AB} \right).
\ee

\begin{proof}[Proof of Theorem~\ref{thm:Uhlmann}.]
The right-hand-side of~\eqref{eq:lambda_iid} does not depend on the specific choice of the purification $\psi_\rho$. In fact, 
any other purification $\psi'_\rho$ can be obtained via a unitary map $V_B$ acting on $\psi_\rho$, i.e.\ $\psi'_\rho=V_B\psi_\rho V_B^\dagger$, and the Haar measure is right invariant. For this reason, we are going to write~\eqref{eq:lambda_iid} in the compact form
\bb \label{eq:property2}
\Lambda_{\rm purify}^{(n)}\big(\rho_A^{\otimes n}\big)= \EE{\ket{\psi_\rho}} \psi_{\!\rho}^{\otimes n}.
\ee
The inequality $\mathbb{D}^{\infty}(\rho_A\|\sigma_A)\leq \mathbb{D}^\infty \left(\rho_{AB}\middle\|\, \mathcal{C}^{\sigma_A}_{AB} \right)$ immediately follows from the data-processing inequality for $\mathbb{D}$, by applying the channel ${\rm Id}_{A^n}\otimes\Tr_{B^n}[\,\cdot\,]$ in the very definition of the right-hand-side of~\eqref{eq:Uhlmann} for any $n\geq 1$. 

Let us now focus on the converse inequality. First, it suffices to prove it for the case where $\rho_{AB} = \psi_{AB}$ is a purification of $\rho_A$. Indeed, any other extension $\rho_{AB'}$ can be obtained by applying a suitable quantum channel to the purifying system, say, $\rho_{AB'} = \pazocal{E}_{B\to B'}(\psi_{AB})$. Hence, 
\bb
\mathbb{D} \left(\rho_{AB'}^{\otimes n} \middle\|\, \pazocal{C}^{\sigma_A^{\otimes n}}_{A^n{B'}^n} \right) \leq \inf_{\tilde\sigma \in \pazocal{C}^\sigma_n} \mathbb{D} \left(\rho_{AB'}^{\otimes n} \middle\|\, \pazocal{E}_{B\to B'}^{\otimes n}(\tilde{\sigma}_{A^nB^n}) \right) \leq \inf_{\tilde\sigma \in \pazocal{C}^\sigma_n} \mathbb{D} \left(\psi_{AB}^{\otimes n}\, \middle\|\, \tilde{\sigma}_{A^n B^n} \right) = \mathbb{D} \left(\psi_{AB}^{\otimes n}\, \middle\|\, \pazocal{C}^\sigma_n \right) .
\ee
Here, the first inequality holds by taking as ansatzes all extensions of $\sigma_A^{\otimes n}$ to $A^n{B'}^n$ that are of the form $\pazocal{E}_{B\to B'}^{\otimes n}(\tilde\sigma_{A^nB^n})$, where $\tilde\sigma_{A^nB^n} \in \pazocal{C}^\sigma_n \coloneqq \pazocal{C}^{\sigma_A^{\otimes n}}_{A^nB^n}$; the second inequality, instead, is simply data-processing. Now, if we could show that the right-hand-side of the above equation is upper bounded by $n\mathbb{D}^\infty(\rho_A\|\sigma_A)$ up to terms that are of order $o(n)$, the proof would be complete. To this end, we write
\bb\label{eq:inequalities}
    \frac{1}{n}\mathbb{D}\left(\rho_A^{\otimes n}\middle\|\sigma_A^{\otimes n}\right)
    & \geqt{(i)}\frac{1}{n}\mathbb{D}\left(\EE{\ket{\psi_\rho}}\psi_{\!\rho}^{\otimes n}\,\middle\|\, \EE{\ket{\phi_\sigma}}\phi_{\sigma}^{\otimes n}\right)\\[4pt]
    &\geqt{(ii)} \frac{1}{n}\min_{\ket{\psi_\rho}}\mathbb{D}\left(\psi_{\!\rho}^{\otimes n}\, \middle\|\, \EE{\ket{\phi_\sigma}}\phi_{\sigma}^{\otimes n}\right)-\tfrac{\log{\rm poly }(n)}{n}\\[4pt]
    &\eqt{(iii)} \frac{1}{n}\mathbb{D}\left(\bar \psi^{\otimes n}\,\middle\|\, \EE{\ket{\phi_\sigma}}\phi_{\sigma}^{\otimes n}\right)-\tfrac{\log{\rm poly }(n)}{n} \\[4pt]
    &\geqt{(iv)} \frac{1}{n}\inf_{\tilde \sigma\in \pazocal{C}_n^{\sigma}}\mathbb{D}\left(\bar \psi_{AB}^{\otimes n}\,\middle\|\, \tilde \sigma_{A^nB^n} \right)-\tfrac{\log{\rm poly }(n)}{n}\, ,
\ee
where $\pazocal{C}_n^{\sigma} = \pazocal{C}_{A^nB^n}^{\sigma_A^{\otimes n}}$, as before; in~(i) we have used the data-processing inequality for $\mathbb{D}$ with the universal purifying map $\Lambda_{\rm purify}^{(n)}$, which acts on i.i.d.\ states as in~\eqref{eq:property}; in~(ii) we have leveraged the weak quasi-concavity of $\mathbb{D}$ to apply Lemma~\ref{lem:quasi_conc}; in~(iii) we have remarked that, for any fixed purification $\psi_\rho$, we can apply a local unitary on the system $B$, implemented by a unitary channel $\pazocal{U}_B$, to get $\bar \psi_{AB}$ out of $\psi_\rho$, namely $\bar \psi_{AB} = \pazocal{U}_B(\psi_\rho)$; in particular, by the unitary invariance of $\mathbb{D}$ --- which follows from the data-processing inequality --- and by the left-invariance of the Haar measure, we have
\bb
    \mathbb{D}\left(\psi_{\!\rho}^{\otimes n}\middle\| \;\EE{\ket{\phi_\sigma}}\phi_{\sigma}^{\otimes n}\right)=\mathbb{D}\left(\big(\pazocal{U}_{B}(\psi_\rho)\big)^{\otimes n}\middle\| \,\EE{\ket{\phi_\sigma}}\big(\pazocal{U}_B(\phi_{\sigma})\big)^{\otimes n}\right) = \mathbb{D}\left(\bar \psi_{AB}^{\otimes n}\middle\|\, \EE{\ket{\phi_\sigma}}\phi_{\sigma}^{\otimes n}\right);
\ee
finally, in (iv) we have noticed that $\EE{\ket{\phi_\sigma}}\phi_{\sigma}^{\otimes n}\in \pazocal{C}_n^{\sigma}$.
Therefore, taking the limit $n\to\infty$ in~\eqref{eq:inequalities}, we get
\bb
    \mathbb{D}^{\infty}(\rho_A\|\sigma_A) \geq \lim_{n\to \infty}\frac{1}{n}\inf_{\tilde \sigma\in \pazocal{C}_{n}^{\sigma}}\mathbb{D}\left(\psi_{AB}^{\otimes n}\middle\| \tilde \sigma_{A^nB^n} \right) = \mathbb{D}^\infty \left(\psi_{AB}\middle\|\, \mathcal{C}^{\sigma_A}_{AB}  \right).
\ee
In particular, this proof immediately implies that the sequence $(\tilde \sigma_{A^nB^n})_n\in \mathcal{C}_{AB}^{\sigma_A} $ given by~\eqref{eq:almost_optimizers} achieves the right-hand side of~\eqref{eq:Uhlmann}.
\end{proof}

\begin{rem} In order to emphasise how powerful the map $\Lambda_{\rm purify}^{(n)}$ is, we show that the previous proof can be compactified to a single line\footnote{
A fussy reader might complain that any proof, with a sufficiently small font, can fit on one line. We simply reply that our line is actually legible.}:
\begin{equation*}
\resizebox{.995\hsize}{!}{%
    $\frac1n \mathbb{D}\big(\bar \psi^{\otimes n}\big\|\,\pazocal{C}^\sigma_{n}\big) \geq \frac1n \mathbb{D}\big(\rho_A^{\otimes n}\big\|\sigma_A^{\otimes n}\big)
    \geq\frac{1}{n}\mathbb{D}\Big(\EE{\ket{\psi_\rho}}\psi_{\!\rho}^{\otimes n}\Big\|\, \EE{\ket{\phi_\sigma}}\phi_{\sigma}^{\otimes n}\Big) \gtrsim \frac{1}{n}\min\limits_{\ket{\psi_\rho}}\mathbb{D}\Big(\psi_{\!\rho}^{\otimes n}\Big\|\, \EE{\ket{\phi_\sigma}}\phi_{\sigma}^{\otimes n}\Big) = \frac{1}{n}\mathbb{D}\Big(\bar \psi^{\otimes n}\Big\|\, \EE{\ket{\phi_\sigma}}\phi_{\sigma}^{\otimes n}\Big) \geq \frac{1}{n}\mathbb{D}\big(\bar \psi^{\otimes n}\big\|\, \pazocal{C}^\sigma_n \big),$
}
\end{equation*}
where $\pazocal{C}_n^{\sigma}\coloneqq \pazocal{C}_{A^nB^n}^{\sigma_A^{\otimes n}}$, and the inequality $\gtrsim$ holds up to terms that vanish as $n\to \infty$.
\end{rem}

The following result was previously found in~\cite[Corollary~5]{Fang2025-variational}. Here, we show that it can also be seen as a simple consequence of the above general result. 

\begin{cor}[{\cite[Corollary~5]{Fang2025-variational}}] \label{cor:Umegaki} Given $\alpha\in [1/2,\infty]$, let $\tilde D_\alpha$ be the sandwiched relative entropy of order $\alpha$. Then, given $\rho_A$ and $\sigma_A$ in $\mathcal{D}(\mathcal{H}_A)$, for any arbitrary extension $\rho_{AB}$ of $\rho_A$ we have
\bb\label{eq:Uhlmann_sandwiched}
    \tilde D_\alpha(\rho_A\|\sigma_A)= \tilde D_\alpha^\infty \left(\rho_{AB}\middle\|\, \mathcal{C}^{\sigma_A}_{AB} \right)=\lim_{n\to\infty}\frac 1 n \tilde D_\alpha\left(\rho_{AB}^{\otimes n}\,\middle\|\, \pazocal{E}^{\otimes n}\circ\Lambda_{\rm purify}^{(n)}\left(\sigma_A^{\otimes n}\right)\right),
\ee 
where $\pazocal{E}$ is any channel that, by acting only on the purifying system, maps a fixed purification of $\rho_A$ to the extension $\rho_{AB}$.
In particular, for $\alpha=1$, the identity~\eqref{eq:Uhlmann_sandwiched} holds for the Umegaki relative entropy $D$.
\end{cor}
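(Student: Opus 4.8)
The plan is to recognise this corollary as a direct instantiation of the Axiomatic Uhlmann theorem (Theorem~\ref{thm:Uhlmann}) for the concrete family $\tilde{D}_\alpha$. Once this is seen, the entire task reduces to checking that $\tilde{D}_\alpha$ meets the two structural hypotheses of that theorem\,---\,being a genuine divergence and being weakly quasi-concave\,---\,and that it is moreover additive under tensor products, the latter being precisely what promotes the asymptotic identity of Theorem~\ref{thm:Uhlmann} to the single-letter identity claimed here.

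First I would record three properties of $\tilde{D}_\alpha$ on the stated range. (i) For $\alpha\in[1/2,\infty]$ the sandwiched R\'enyi divergence obeys the data-processing inequality (a standard fact, whose failure for $\alpha<1/2$ is exactly why the lower cutoff $\alpha\geq 1/2$ appears), so it is a divergence in the sense of the Definition. (ii) It is additive, $\tilde{D}_\alpha(\rho\otimes\rho'\|\sigma\otimes\sigma')=\tilde{D}_\alpha(\rho\|\sigma)+\tilde{D}_\alpha(\rho'\|\sigma')$, which is immediate from the defining trace expression since it factorises on tensor products; consequently $\tilde{D}_\alpha^\infty=\tilde{D}_\alpha$. (iii) It is weakly quasi-concave: for every $\alpha\in(0,\infty)$ this is exactly Proposition~\ref{prop:weak}, which already covers the whole sub-range $[1/2,\infty)$.

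Next I would patch the endpoint $\alpha=\infty$, which lies outside the scope of Proposition~\ref{prop:weak}. Here $\tilde{D}_\infty=D_{\max}$, and weak quasi-concavity holds with the trivial polynomial $P_d(N,s_\sigma)=N$: writing $M_i\coloneqq\sigma^{-1/2}\rho_i\sigma^{-1/2}\geq 0$, one has $\sum_i p_i M_i\geq p_j M_j$ for the index $j$ maximising $p_j$ (so that $p_j\geq 1/N$), and operator monotonicity of $\|\cdot\|_\infty$ on positive operators then gives $\big\|\sum_i p_i M_i\big\|_\infty\geq \tfrac1N\min_i\|M_i\|_\infty$; taking logarithms yields~\eqref{eq:weak_conc} directly. (Alternatively, the $\alpha=\infty$ case can be recovered as the $\alpha\to\infty$ limit of~(iii).)

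With (i)--(iii) established, Theorem~\ref{thm:Uhlmann} applies and gives $\tilde{D}_\alpha^\infty(\rho_A\|\sigma_A)=\tilde{D}_\alpha^\infty(\rho_{AB}\|\,\mathcal{C}^{\sigma_A}_{AB})$; additivity~(ii) collapses the left-hand side to $\tilde{D}_\alpha(\rho_A\|\sigma_A)$, producing the first equality of~\eqref{eq:Uhlmann_sandwiched}. The second equality is the ``moreover'' clause of Theorem~\ref{thm:Uhlmann}: the universal sequence $\tilde{\sigma}_{A^nB^n}=\pazocal{E}^{\otimes n}\circ\Lambda_{\rm purify}^{(n)}(\sigma_A^{\otimes n})$ attains the optimum, and additivity (via the Fekete remark preceding the theorem) upgrades the defining $\liminf$ to a genuine $\lim$. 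The case $\alpha=1$ then follows since $\tilde{D}_1=D$ is the Umegaki relative entropy. I expect the only delicate point to be the endpoint bookkeeping\,---\,verifying that data-processing and weak quasi-concavity both survive at $\alpha=1/2$ and at $\alpha=\infty$\,---\,rather than any substantial new argument, as the heavy lifting is entirely outsourced to Theorem~\ref{thm:Uhlmann}.
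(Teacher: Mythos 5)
Your proposal is correct and follows essentially the same route as the paper: verify data-processing, additivity, and weak quasi-concavity of $\tilde D_\alpha$, then invoke Theorem~\ref{thm:Uhlmann} together with its universal-optimiser clause and Fekete's lemma. Your explicit patch for the endpoint $\alpha=\infty$ (via operator monotonicity of $\|\cdot\|_\infty$ applied to $\sigma^{-1/2}\rho_i\sigma^{-1/2}$) is a worthwhile addition, since Proposition~\ref{prop:weak} as stated only covers $\alpha\in(0,\infty)$ and the paper's one-line proof silently relies on it at $\alpha=\infty$.
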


\begin{proof}
It is well known that $\tilde D_{\alpha}$ satisfies the data-processing inequality if $1/2\leq \alpha\leq \infty$. Weak quasi-concavity was proved in Proposition~\ref{prop:weak}. Therefore, due to Theorem~\ref{eq:Uhlmann} and to the additivity of $\tilde D_{\alpha}$, the proof is complete.
\end{proof}



\subsection*{Acknowledgements}
We thank Giulia Mazzola, Lukas Schmitt, David Sutter, Lennart Bittel, and Antonio Anna Mele for valuable comments on an earlier version of the manuscript. We acknowledge financial support from the European Union under the European Research Council (ERC Grant Agreement No.~101165230).

\bibliography{biblio}

@article{utsumi2025,
  title         = {Quantum Algorithms for {U}hlmann Transformation},
  author        = {Utsumi, T. and Nakata, Y. and Wang, Q. and Takagi, R.},
  journal       = {Preprint arXiv:2509.03619},
  year          = {2025},
  doi           = {10.48550/arXiv.2509.03619},
  url           = {https://arxiv.org/abs/2509.03619}
}

@article{chen2025localtestunitarilyinvariant,
  title         = {Local Test for Unitarily Invariant Properties of Bipartite Quantum States},
  author        = {Chen, K. and Wang, Q. and Zhang, Z.},
  year          = {2025},
  journal       = {Preprint arXiv:2404.04599},
  url           = {https://arxiv.org/abs/2404.04599},
  doi={10.48550/arXiv.2404.04599}
}

@article{soleimanifar2022testingmatrixproductstates,
  title        = {Testing matrix product states},
  author       = {Soleimanifar, M. and Wright, J.},
  year         = {2022},
  journal      = {Preprint arXiv:2201.01824},
  url          = {https://arxiv.org/abs/2201.01824},
  doi = {10.48550/arXiv.2201.01824}
}

@article{harrow2013churchsymmetricsubspace,
  title         = {The Church of the Symmetric Subspace},
  author        = {Harrow, A. W.},
  year          = {2013},
  journal       = {Preprint arXiv:1308.6595},
  url           = {https://arxiv.org/abs/1308.6595},
  doi = {10.48550/arXiv.1308.6595}
}

@article{Mosonyi_2015_coding,
   title={Coding Theorems for Compound Problems via Quantum {R}\'{e}nyi Divergences},
   volume={61},
   ISSN={1557-9654},
   url={http://dx.doi.org/10.1109/TIT.2015.2417877},
   doi={10.1109/tit.2015.2417877},
   number={6},
   journal={IEEE Trans. Inf. Theory},
   publisher={Institute of Electrical and Electronics Engineers (IEEE)},
   author={Mosonyi, M.},
   year={2015},
   pages={2997--3012}
}

@article{Fawzi_2021_defining,
   title={Defining quantum divergences via convex optimization},
   volume={5},
   ISSN={2521-327X},
   url={http://dx.doi.org/10.22331/q-2021-01-26-387},
   DOI={10.22331/q-2021-01-26-387},
   journal={Quantum},
   publisher={Verein zur Forderung des Open Access Publizierens in den Quantenwissenschaften},
   author={Fawzi, H. and Fawzi, O.},
   year={2021},
   pages={387}
}

@article{Mazzola_2025,
   title={Uhlmann’s Theorem for Relative Entropies},
   volume={71},
   ISSN={1557-9654},
   url={http://dx.doi.org/10.1109/TIT.2025.3591775},
   DOI={10.1109/tit.2025.3591775},
   number={9},
   journal={IEEE Trans. Inf. Theory},
   publisher={Institute of Electrical and Electronics Engineers (IEEE)},
   author={Mazzola, G. and Sutter, D. and Renner, R.},
   year={2025},
   pages={7039--7051}
}

@article{Fang2025-variational,
  title={Variational expressions and {U}hlmann theorem for measured divergences}, 
  author={Fang, K. and Fawzi, H. and Fawzi, O.},
  year={2025},
  journal={Preprint arXiv:2502.07745},
  doi={10.48550/arXiv.2502.07745}
}

@article{pelecanos2025,
  title={Mixed state tomography reduces to pure state tomography},
  author={Pelecanos, A. and Spilecki, J. and Tang, E. and Wright, J.},
  journal={Preprint arXiv:2511.15806},
  year={2025},
  url={https://arxiv.org/abs/2511.15806},
  doi={10.48550/arXiv.2511.15806}
}

@article{tang2025,
  title={Conjugate queries can help},
  author={Tang, E. and Wright, J. and Zhandry, M.},
  journal={Preprint arXiv:2510.07622},
  year={2025},
  url={https://arxiv.org/abs/2510.07622},
  doi={10.48550/arXiv.2510.07622}
}

@book{TOMAMICHEL,
  title={Quantum Information Processing with Finite Resources: Mathematical Foundations},
  author={Tomamichel, M.},
  year={2015},
  publisher={Springer}
}

@book{MARK,
  title={Quantum Information Theory},
  author={Wilde, M. M.},
  isbn={9781316813300},
  year={2017},
  publisher={Cambridge University Press},
  edition={2nd}
}

@article{squashed,
author = {Christandl, M. and Winter, A.},
title = {Squashed entanglement: An additive entanglement measure},
journal = {J. Math. Phys.},
volume = {45},
number = {3},
pages = {829--840},
year = {2004},
doi = {https://doi.org/10.1063/1.1643788}
}

@article{Umegaki1962,
  author = "Umegaki, H.",
  doi = "10.2996/kmj/1138844604",
  journal = "Kodai Math. Sem. Rep.",
  number = "2",
  pages = "59--85",
  publisher = "Tokyo Institute of Technology, Department of Mathematics",
  title = "{Conditional expectation in an operator algebra. IV. Entropy and information}",
  volume = "14",
  year = "1962"
}

@article{Datta08,
  author={Datta, N.},
  journal={IEEE Trans. Inf. Theory}, 
  title={Min- and max-relative entropies and a new entanglement monotone}, 
  year={2009},
  volume={55},
  number={6},
  pages={2816--2826},
  doi={10.1109/TIT.2009.2018325}
}

@article{tomamichel12smooth_tutorial,
   author = {Tomamichel, M.},
   city = {Singapore},
   journal = {QCrypt 2012},
   month = {September},
   title = {Smooth entropies: A Tutorial With Focus on Applications in Cryptography},
   url = {https://marcotom.info/files/qcrypt2012.pdf},
   year = {2012}
}

@article{newRenyi,
author = {M\"uller-Lennert, M. and Dupuis, F. and Szehr, O. and Fehr, S. and Tomamichel, M.},
title = {On quantum {R}{\'e}nyi entropies: A new generalization and some properties},
journal = {J. Math. Phys.},
volume = {54},
number = {12},
pages = {122203},
year = {2013}
}

@article{Wilde2014,
  author = {Wilde, M. M. and Winter, A. and Yang, D.},
  journal = {Commun. Math. Phys.},
  number = {2},
  pages = {593--622},
  title = {Strong Converse for the Classical Capacity of Entanglement-Breaking and {H}adamard Channels via a Sandwiched {R{\'e}nyi} Relative Entropy},
  volume = {331},
  year = {2014}
}

@article{BBM,
  title = {Quantum cryptography without {B}ell's theorem},
  author = {Bennett, C. H. and Brassard, G. and Mermin, N. D.},
  journal = {Phys. Rev. Lett.},
  volume = {68},
  issue = {5},
  pages = {557--559},
  numpages = {0},
  year = {1992},
  publisher = {American Physical Society},
  doi = {10.1103/PhysRevLett.68.557},
  url = {https://link.aps.org/doi/10.1103/PhysRevLett.68.557}
}

\appendix

\section{Some proofs}

\subsection{Proof of Proposition~\ref{prop:weak}}\label{ap:weak}

We need a couple of preliminary lemmas.

\begin{lemma}[{\cite[Lemma~4.11]{TOMAMICHEL}}] \label{lem:marco} Let $\tilde D_{\alpha}$ be the sandwiched R\'enyi divergence of order $\alpha\in (0,\infty)$, let $\rho$ and $\sigma$ be states in $\mathcal{D}(\mathcal{H})$, and let $\pazocal{P}_\sigma$ be the pinching map on $\sigma$. Then we have
\bb
    \tilde D_{\alpha}(\rho\|\sigma)\leq \tilde D_{\alpha}(\pazocal{P}_\sigma(\rho)\|\sigma)+\eta_\alpha \log|{\rm spec }(\sigma)|\, , \qquad \eta_\alpha\coloneqq
    \begin{cases}
        1 & 0<\alpha\leq 2\, ,\\
        \frac{\alpha}{\alpha-1} & \alpha>2\, .
    \end{cases}
\ee
\end{lemma}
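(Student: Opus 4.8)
The plan is to reduce the whole statement to a single scalar comparison between $\tilde Q_\alpha(\rho\|\sigma)$ and $\tilde Q_\alpha(\pazocal{P}_\sigma(\rho)\|\sigma)$, and then to pass through the prefactor $\tfrac{1}{\alpha-1}$, whose sign flips across $\alpha=1$. Write $s=|\spec(\sigma)|$ and recall that the pinching map is a uniform mixture of conjugations: with $U=\sum_x e^{2\pi i x/s}\Pi_x$ built from the spectral projections $\Pi_x$ of $\sigma$ (so that $U$ commutes with every power of $\sigma$), one has $\pazocal{P}_\sigma(\,\cdot\,)=\tfrac1s\sum_{k=0}^{s-1}U^k(\,\cdot\,)U^{-k}$. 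Setting $A\coloneqq\sigma^{\frac{1-\alpha}{2\alpha}}\rho\,\sigma^{\frac{1-\alpha}{2\alpha}}$, so that $\tilde Q_\alpha(\rho\|\sigma)=\Tr[A^\alpha]$, the commutation of each $U^k$ with the powers of $\sigma$ lets me pull the sandwiching through the mixture, giving $\sigma^{\frac{1-\alpha}{2\alpha}}\pazocal{P}_\sigma(\rho)\,\sigma^{\frac{1-\alpha}{2\alpha}}=\tfrac1s\sum_k U^kAU^{-k}=\pazocal{P}_\sigma(A)$ and hence $\tilde Q_\alpha(\pazocal{P}_\sigma(\rho)\|\sigma)=\Tr[\pazocal{P}_\sigma(A)^\alpha]$. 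The crucial structural point is that each summand $A_k\coloneqq U^kAU^{-k}$ is unitarily equivalent to $A$, so $\Tr[A_k^\alpha]=\Tr[A^\alpha]$ for every $k$.

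For $0<\alpha\le2$ I would extract the sharp constant $\eta_\alpha=1$ from the (super/sub)additivity of the trace power. When $1<\alpha\le2$, superadditivity $\sum_k\Tr[A_k^\alpha]\le\Tr[(\sum_k A_k)^\alpha]$ gives $s\,\Tr[A^\alpha]\le s^\alpha\,\Tr[\pazocal{P}_\sigma(A)^\alpha]$, i.e. $\tilde Q_\alpha(\rho\|\sigma)\le s^{\alpha-1}\,\tilde Q_\alpha(\pazocal{P}_\sigma(\rho)\|\sigma)$; applying $\tfrac{1}{\alpha-1}\log(\cdot)$, a positive factor, yields the claim with $\eta_\alpha=1$. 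When $0<\alpha<1$ I replace this by subadditivity $\Tr[(\sum_k A_k)^\alpha]\le\sum_k\Tr[A_k^\alpha]$, which gives $\tilde Q_\alpha(\pazocal{P}_\sigma(\rho)\|\sigma)\le s^{1-\alpha}\,\tilde Q_\alpha(\rho\|\sigma)$; now $\tfrac{1}{\alpha-1}<0$ reverses the inequality when taking logarithms, and again $\eta_\alpha=1$. Both additivity facts I would prove in the same elementary way: differentiate $t\mapsto\Tr[(A+tB)^\alpha]$, whose derivative is $\alpha\,\Tr[(A+tB)^{\alpha-1}B]$, bound $(A+tB)^{\alpha-1}$ against $(tB)^{\alpha-1}$ using $A+tB\ge tB\ge0$ together with the operator monotonicity (for $1\le\alpha\le2$) or anti-monotonicity (for $0<\alpha\le1$) of $t\mapsto t^{\alpha-1}$, and integrate over $t\in[0,1]$, iterating over the $s$ summands. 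The boundary value $\alpha=1$ then follows by continuity of $\tilde D_\alpha$ in $\alpha$.

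For $\alpha>2$ the operator-monotonicity step above is no longer available, so I would fall back on the pinching inequality $\rho\le s\,\pazocal{P}_\sigma(\rho)$, which is immediate from $s\,\pazocal{P}_\sigma(\rho)-\rho=\sum_{k=1}^{s-1}U^k\rho\,U^{-k}\ge0$. Conjugating by $\sigma^{\frac{1-\alpha}{2\alpha}}$ preserves the positive-semidefinite order, so $A\le s\,\pazocal{P}_\sigma(A)$, and since $X\mapsto\Tr[X^\alpha]$ is monotone under that order (Weyl's monotonicity theorem applied to the increasing map $t\mapsto t^\alpha$), we get $\Tr[A^\alpha]\le s^\alpha\,\Tr[\pazocal{P}_\sigma(A)^\alpha]$. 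Taking $\tfrac{1}{\alpha-1}\log(\cdot)$ produces exactly $\tilde D_\alpha(\rho\|\sigma)\le\tilde D_\alpha(\pazocal{P}_\sigma(\rho)\|\sigma)+\tfrac{\alpha}{\alpha-1}\log s$, i.e. $\eta_\alpha=\tfrac{\alpha}{\alpha-1}$.

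The main obstacle, and the true source of the case split at $\alpha=2$, is obtaining the sharp constant $\eta_\alpha=1$ in the range $\alpha>1$: the universally valid pinching inequality only delivers $\tfrac{\alpha}{\alpha-1}$, and sharpening it to $1$ requires superadditivity of $\Tr[\cdot^\alpha]$, whose elementary proof rests on the operator monotonicity of $t\mapsto t^{\alpha-1}$ and is therefore confined to $\alpha\le2$. Beyond $\alpha=2$ one simply settles for the weaker constant. The only remaining care concerns support and invertibility: one restricts to $\supp(\sigma)$, interprets $\sigma^{\frac{1-\alpha}{2\alpha}}$ via the generalized inverse, and disposes of the degenerate cases (e.g.\ $\supp(\rho)\nsubseteq\supp(\sigma)$, where both sides are $+\infty$ for $\alpha>1$) by the usual continuity and limiting arguments, all of which are routine.
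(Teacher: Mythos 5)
Your proof is correct, but note that the paper does not prove this lemma at all: it is imported verbatim as \cite[Lemma~4.11]{TOMAMICHEL}, so there is no in-paper argument to compare against. Your route is a clean, self-contained one. The key identity $\sigma^{\frac{1-\alpha}{2\alpha}}\pazocal{P}_\sigma(\rho)\sigma^{\frac{1-\alpha}{2\alpha}}=\pazocal{P}_\sigma(A)$ (valid because the pinching unitaries commute with all powers of $\sigma$), the unitary invariance $\Tr[A_k^\alpha]=\Tr[A^\alpha]$, and the McCarthy-type super/subadditivity of $\Tr[(\cdot)^\alpha]$ together give exactly $\tilde Q_\alpha(\rho\|\sigma)\le s^{\alpha-1}\tilde Q_\alpha(\pazocal{P}_\sigma(\rho)\|\sigma)$ for $1<\alpha\le 2$ and the reversed $\tilde Q_\alpha(\pazocal{P}_\sigma(\rho)\|\sigma)\le s^{1-\alpha}\tilde Q_\alpha(\rho\|\sigma)$ for $\alpha<1$; the sign of $\frac{1}{\alpha-1}$ is handled correctly in both regimes, and the $\alpha>2$ fallback via the pinching operator inequality $A\le s\,\pazocal{P}_\sigma(A)$ plus Weyl monotonicity indeed yields $\eta_\alpha=\frac{\alpha}{\alpha-1}$. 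Two minor remarks. First, your diagnosis of the case split is slightly off as a statement about the mathematics: McCarthy's superadditivity $\Tr[(A+B)^\alpha]\ge\Tr[A^\alpha]+\Tr[B^\alpha]$ actually holds for \emph{all} $\alpha\ge1$ (only your elementary derivative proof, which needs operator monotonicity of $t\mapsto t^{\alpha-1}$, is confined to $\alpha\le2$); with the full inequality your argument would give $\eta_\alpha=1$ for every $\alpha>1$, strictly improving the quoted constant. Since the lemma only claims the weaker constant for $\alpha>2$, this costs you nothing. Second, the support and invertibility caveats you defer (generalized inverses, the $t\to0^+$ behaviour of $(tB)^{\alpha-1}$ in the $\alpha<1$ integration, and the degenerate $+\infty$ cases) are genuinely routine, so the deferral is acceptable.
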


Now we have all the ingredients to prove Proposition~\ref{prop:weak}.

\begin{proof}[Proof of Proposition~\ref{prop:weak}]
     Let $\rho\coloneqq\sum_{i=1}^N p_i\rho_i$. We give a short proof for each case.\\[0.5em]
    \noindent\emph{Umegaki relative entropy $D$ (i.e. case $\alpha=1$).} The result immediately follows from~\eqref{eq_quasi_conc} combined with Remark~\ref{rem:8}. \\[0.5em]
    \noindent\emph{Sandwiched R\'enyi divergence $\tilde D_\alpha$ of order $\alpha\in(0,1)$}. We have
    \bb\label{eq:q_alpha}
         \tilde Q_\alpha(\rho\|\sigma)&\leqt{(i)} \sum_{i=1}^{N}p_i^\alpha\tilde Q_\alpha(\rho_i\|\sigma)
         \leq \left(\sum_{i=1}^{N}p_i^\alpha\right)\max_{1\leq j\leq N}\tilde Q_\alpha(\rho_i\|\sigma)
         \leqt{(ii)} N^{1-\alpha} \max_{1\leq i\leq N}\tilde Q_\alpha(\rho_i\|\sigma) ,
    \ee
    where in (i) we have used~\cite[Proposition~III.8]{Mosonyi_2015_coding}, and in (ii) we have leveraged the concavity of $x\mapsto x^\alpha$, namely
    \bb
        \sum_{i=1}^{N}p_i^\alpha=N\sum_{i=1}^{N}\frac{1}{N}p_i^\alpha\leq N\left(\sum_{i=1}^N\frac{1}{N}p_i\right)^\alpha = N^{1-\alpha}.
    \ee
    From~\eqref{eq:q_alpha} we immediately get
    \bb
        \tilde D_\alpha (\rho\|\sigma)\geq \min_{1\leq i \leq N}\tilde D_\alpha (\rho_i\|\sigma)-\log N.
    \ee
    \noindent\emph{Sandwiched R\'enyi divergence $\tilde D_\alpha$ of order $\alpha\in(1,\infty)$.} We notice that $p_k\rho_k\leq \rho$ for any $1\leq k\leq N$. Therefore,
    \bb\label{eq:minorisation}
        \tilde Q_\alpha(\rho\|\sigma)=\Tr\left[(\sigma^{\frac{1-\alpha}{2\alpha}}\rho\sigma^{\frac{1-\alpha}{2\alpha}})^\alpha\right]\geq p_k^\alpha\Tr\left[(\sigma^{\frac{1-\alpha}{2\alpha}}\rho_k\sigma^{\frac{1-\alpha}{2\alpha}})^\alpha\right]\,,
    \ee
    where we observed that $\Tr [A^\alpha] \geq \Tr[B^\alpha]$ if $A \geq B \geq 0$.
    In particular, since $\sum_{i=1}^Np_i=1$, there is at least one $\bar k$ such that $p_{\bar k}\geq 1/N$. Hence,
    \bb
        \tilde D_\alpha(\rho\|\sigma)&\geq  \frac{1}{\alpha-1}\Tr\left[(\sigma^{\frac{1-\alpha}{2\alpha}}\rho_{\bar k}\sigma^{\frac{1-\alpha}{2\alpha}})^\alpha\right] +\frac{\alpha}{\alpha-1}\log p_{\bar k}\\
        &\geq \tilde D_\alpha(\rho_{\bar k}\|\sigma)-\frac{\alpha}{\alpha-1}\log N\\
        &\geq \min_{1\leq i \leq N}\tilde D_\alpha(\rho_i\|\sigma)-\frac{\alpha}{\alpha-1}\log N.
    \ee 
    \emph{Measured R\'enyi divergences $D_{M,\alpha}$ of order $\alpha\in(0,\infty)$.}
    We have
    \vspace{-4pt}
    \bb
        D_{M,\alpha}(\rho\|\sigma) &\geqt{(iii)} D_{M,\alpha}\big(\pazocal{P}_\sigma(\rho)\big\|\sigma\big) \\
        &\eqt{(iv)} \tilde D_{\alpha}\big(\pazocal{P}_\sigma(\rho)\big\|\sigma\big)\\
        &\geqt{(v)} \tilde D_{\alpha}(\rho\|\sigma)-\eta_\alpha \log|{\rm spec }(\sigma)|\\
        &\geqt{(vi)} \min_{1\leq i\leq n}\tilde D_{\alpha}(\rho_i\|\sigma)-\log {\rm poly}_d(N, s_\sigma)-\eta_\alpha \log s_\sigma\\
        &\geqt{(vii)} \min_{1\leq i\leq n}\tilde D_{M,\alpha}(\rho_i\|\sigma)-\log {\rm poly}_d(N, s_\sigma)-\eta_\alpha \log s_\sigma
    \ee
    where: (iii)~holds by data-processing; in~(iv) we have noticed that $\pazocal{P}_\sigma(\rho)$ and $\sigma$ commute, so that the measured and the sandwiched R\'enyi divergences coincide; in~(v) we have leveraged Lemma~\ref{lem:marco}; in~(vi) we have recalled that $\tilde D_{\alpha}$ is weakly quasi-concave, setting also $s_\sigma\coloneqq|{\rm spec }(\sigma)|$; in~(vii) we have used the fact that $D_{\alpha}\geq D_{M,\alpha}$, which is an immediate consequence of the data-processing inequality. \\[0.5em]
    \emph{\# R\'enyi divergences $D^{\#}_{\alpha}$ of order $\alpha\in(1,\infty)$.} Leveraging~\cite[Proposition~3.4]{Fawzi_2021_defining}, we have
    \bb
        D^{\#}_{\alpha}(\rho\|\sigma)&\geq D_{M,\alpha}(\rho\|\sigma)\\
        &\geqt{(viii)} \min_{1\leq i\leq N}D_{M,\alpha}(\rho_i\|\sigma)-\log{\rm poly}_d(N,s_\sigma)\\
        &\geq  \min_{1\leq i\leq N}D^{\#}_{\alpha}(\rho_i\|\sigma)-\log{\rm poly}_d(N,s_\sigma)-\frac{\alpha}{\alpha-1}\log|{\rm spec}(\sigma)|,
    \ee    
    where in~(viii) we have used the weak quasi-concavity of $D_{M,\alpha}$.
    \end{proof}

\subsection{Proof of Lemma~\ref{lem:quasi_conc}}\label{ap:quasi_conc}

\begin{proof}
    The real vector space $H_{d,n}^{\rm sym}$ of permutationally symmetric Hermitian operators on $\mathcal{H}^{\otimes n}\simeq \big(\mathbb{C}^d\big)^{\otimes n}$, by Schur-Weyl duality, has the form
    \bb
        H_{d,n}^{\rm sym}=\bigoplus_{\lambda\in\pazocal{Y}_n^d}\pazocal{U}_\lambda\otimes\id_{\pazocal{V}_\lambda},
    \ee
    where $\lambda$ is an index ranging on the set $\pazocal{Y}_d^n$ of Young diagrams with size $n$ and depth at most $d$, and $\pazocal{U}_\lambda$ and $\pazocal{V}_\lambda$ are irreps of the special unitary group ${\rm SU}(d)$ and of the symmetric group $S_n$, respectively. 
    Since $\dim \pazocal{U}_\lambda\leq (n+1)^{d(d-1)/2}$ and $|\pazocal{Y}_n^d|\leq (n+1)^{d-1}$, we can upper bound $\dim H_{d,n}^{\rm sym}\leq (n+1)^{(d-1)\left(\frac{d}{2}+1\right)}$.
    By Carath\'eodory's theorem, since $\mathbb{E}_{\rho\sim \nu}\rho^{\otimes n}\in H_{d,n}^{\rm sym}$, we can write it as a convex combination of at most $N=(n+1)^{(d-1)\left(\frac{d}{2}+1\right)}+1$ terms of the form $\rho^{\otimes n}$, where $\rho\in\supp(\nu)$:
    \bb
        \mathbb{E}_{\rho\sim \nu}\rho^{\otimes n}=\sum_{i=1}^Np_i\rho_i^{\otimes n}\qquad \rho_i\in\supp(\nu).
    \ee
     The cardinality of the spectrum of $\sigma^{(n)}$ is polynomial in $n$ due to the permutational invariance of $\sigma^{(n)}$. Indeed, if $\ket{\psi_\gamma}$ is an eigenvector with eigenvalue $\gamma$, then also $\psi'_\gamma \coloneqq\frac{1}{n!}\sum_{\pi\in S_n}U_\pi \psi_\gamma$, which belongs to the symmetric subspace of $\mathcal{H}^{\otimes n}$. Such space has dimension $\binom{n+d-1}{n}\leq (n+d-1)^{d-1}={\rm poly}_d(n)$. Hence, $s_\sigma=|{\rm spec}(\sigma)|$ grows at most polynomially in $n$. Therefore, in the definition of weak quasi-concavity, we can upper bound $P_d(N, s_\sigma)$ --- for suitable $a,b,c>0$ possibly depending on $d$ but not on $n$ --- as
     \bb
        P_d(N, s_\sigma)\leq a(N+s_\sigma)^b+c\leqt{(i)} {\rm poly}_d(n),
     \ee
     where in (i) we have used the polynomial upper bounds on $N$ and $s_\sigma$. This concludes the proof.
\end{proof}

\subsection{The commutants of the $\boldsymbol{n}$-fold tensor powers of the state and the unitary representation coincide}

\begin{lemma}\label{lemma_commutant}
    Let $\Theta_n$ be a linear operator over $\mathcal{H}^{\otimes n}$. Then the following four statements are equivalent:
    \begin{enumerate}[(a)]
        \item $[\Theta_n,U^{\otimes n}]=0$ for all unitaries $U$ on $\mathcal{H}$;
        \item $[\Theta_n,\rho^{\otimes n}]=0$ for all states $\rho$ on $\mathcal{H}$;
        \item $[\Theta_n,X_n]=0$ for all permutationally symmetric operators $X$ on $\mathcal{H}^{\otimes n}$.
        \item $\big[\Theta_n, \sum_{j=1}^nH^{(j)}\big]=0$ for all Hermitian operators $H$, where we denoted as $H^{(j)}\coloneqq \mathbb{1}^{\otimes (j-1)}\otimes H\otimes \mathbb{1}^{\otimes (n-j)}$ the operator that acts as $H$ on the $j$th system and as the identity on all the other systems.
    \end{enumerate}
\end{lemma}

\begin{proof}
Statements~(b) and~(c) are equivalent by Lemma~\ref{fn:ll}. Moreover, (c)~implies~(a), due to the fact that $U^{\otimes n}$ is manifestly permutationally symmetric.


We now show that~(a) implies~(d). If $\big[\Theta_n,U^{\otimes n}\big]=0$ for every unitary $U$, then in particular $\big[\Theta_n,(e^{itH})^{\otimes n}\big]=0$ for every Hermitian operator $H$ and every $t\in\mathbb{R}$. Differentiating with respect to $t$ at $t=0$ yields $\big[\Theta_n, \sum_{j=1}^n{H}^{(j)}\big]=0$ for all Hermitian operators $H$, which is exactly~(d). (The reasoning can be also reversed to show that~(a) and~(d) are in fact equivalent.)

To conclude the proof, it suffices to show that~(d) implies~(b). If $\big[\Theta_n,\sum_{j=1}^n H^{(j)}\big]=0$ for all Hermitian operators $H$, then 
\bb
0 = \Big[\Theta_n,\, \exp\Big[ \sumno_{j=1}^n H^{(j)} \Big] \Big] = \Big[\Theta_n,\big(e^{H}\big)^{\otimes n}\Big]
\ee
for every Hermitian operator $H$. Since any strictly positive state $\rho$ can be written as $\rho=\frac{e^{H}}{\Tr e^{H}}$ for some Hermitian operator $H$, we obtain $\big[\Theta_n,\rho^{\otimes n}\big]=0$ for all strictly positive states $\rho$. By continuity, it follows that $\big[\Theta_n, \rho^{\otimes n}\big]=0$ for all states $\rho$, which is exactly~(b).
\end{proof}

\section{Finite-rank random purification channel}
\label{app:finite-rank-rpc}

In the main text, the purifying system has the same dimension as the input
system. In this appendix we record a finite-rank variant of the same
construction. The point is that, if the input state has rank at most \(r\), then
an \(r\)-dimensional purifying system is enough.

Throughout this appendix, let
\bb
    \mathcal H_A\simeq \mathbb C^d,
    \qquad
    \mathcal H_B\simeq \mathbb C^r,
    \qquad
    r\le d .
\ee
For every isometry \(V:\mathcal H_B\to\mathcal H_A\), define the unnormalised
maximally entangled vector
\bb
    \ket{\Gamma_V}_{AB}
    \coloneqq 
    (V\otimes I_B)\ket{\Gamma_r}\, ,
    \qquad
    \ket{\Gamma_r} \coloneqq \sum_{i=1}^r \ket{i}\otimes \ket{i}\, .
\ee
Thus $\ket{\Gamma_V}$ is maximally entangled between $B$ and
the $r$-dimensional subspace $V\mathcal{H}_B\subseteq \mathcal{H}_A$.

Fix an isometry $V_0:\mathcal{H}_B\to\mathcal{H}_A$. Let $U_A$ be Haar-random on $\mathcal{H}_A$, and set $V=U_A V_0$. We write
$\mathbb E_V$ for expectation over this random isometry. By Haar invariance,
this distribution is independent of the choice of $V_0$. Define
\bb
R_{r,n} \coloneqq \E_V \left[\ketbra{\Gamma_V}^{\otimes n}\right] .
\ee
This is the finite-rank analogue of the random maximally entangled operator from
the main text. The only difference is that, for \(r<d\), its partial trace over
the purifying system is not the identity. Instead,
\bb
\Tr_{B^n} R_{r,n} = \E_V \left[ (VV^\dagger)^{\otimes n} \right] \eqqcolon M_{r,n}\, .
\ee
We define
\bb
\widehat{R}_{r,n} \coloneqq  \left(M_{r,n}^{-1/2}\otimes I_{B^n}\right) R_{r,n}\left(M_{r,n}^{-1/2}\otimes I_{B^n}\right),
\ee
where the inverse is taken on the support of the positive semidefinite operator $M_{r,n}$. Let $\Pi_{r,n}$ be the orthogonal projection onto $\supp(M_{r,n})$. Then
\bb
\Tr_{B^n}\widehat{R}_{r,n} = \Pi_{r,n}\, ,
\ee
because 
\bb
\Tr_{B^n} \widehat{R}_{r,n} &= M_{r,n}^{-1/2} \Tr_{B^n}[R_{r,n}] M_{r,n}^{-1/2} \\
&= M_{r,n}^{-1/2} M_{r,n} M_{r,n}^{-1/2} \\
&= \Pi_{r,n}.
\ee
We now define a map
\bb
\Lambda_r^{(n)}(X) \coloneqq  \sqrt{\widehat{R}_{r,n}} (X\otimes I_{B^n}) \sqrt{\widehat{R}_{r,n}} + \Tr\big[(I_{A^n}-\Pi_{r,n})X\big]\,\tau,
\ee
where $\tau = \tau_{A^nB^n}$ is any fixed state on $A^nB^n$. The second term only makes the map trace-preserving on all inputs. It will vanish on the rank-constrained inputs considered below.

\begin{lemma}
\label{lem:rank-r-channel}
The map $\Lambda_r^{(n)}$ is a quantum channel.
\end{lemma}

\begin{proof}
Complete positivity is immediate from the definition, and, as mentioned, the addition of the second term also makes it trace preserving.
\end{proof}

The next lemma is the only new ingredient. It says that, once a rank-$r$ support is fixed, the operator $\widehat{R}_{r,n}$ becomes the full-rank random maximally entangled operator on that support.

\begin{lemma}[(Compression to a fixed support)] \label{lem:rank-r-compression}
Let $S\subseteq\mathcal{H}_A$ be an $r$-dimensional subspace, and let $P_S$
be the projection onto $S$. Fix an isometry $W:\mathcal{H}_B \to \mathcal{H}_A$ whose image is $S$. Then
\bb
\big(P_S^{\otimes n}\otimes I_{B^n}\big)\, \widehat{R}_{r,n}\,\big(P_S^{\otimes n}\otimes I_{B^n}\big) = \E_{U\in\mathsf U(r)} \left[ \ketbra{\Gamma_{WU}}^{\otimes n} \right].
\ee
\end{lemma}

\begin{proof}
Let $V:\mathcal H_B\to\mathcal H_A$ be the random isometry used in the
definition of $R_{r,n}$, and set
\bb
C\coloneqq W^\dagger V\, .
\ee
Here and below, $\mathbb E_C$ means the expectation induced by first sampling $V$ and then setting $C=W^\dagger V$.

Now, since $P_S=WW^\dagger$, we have
\bb
(P_S\otimes I_B)\ket{\Gamma_V} = \ket{\Gamma_{WC}} = (W\otimes I_B)\ket{\Gamma_C}\, .
\ee
Therefore,
\bb
\big(P_S^{\otimes n}\otimes I_{B^n}\big)\, R_{r,n}\, \big(P_S^{\otimes n}\otimes I_{B^n}\big) = \big(W^{\otimes n}\otimes I_{B^n}\big)\, \E_C \left[ \ketbra{\Gamma_C}^{\otimes n} \right] \big((W^\dagger)^{\otimes n}\otimes I_{B^n}\big)\, .
\ee
Similarly, compressing $M_{r,n} = \Tr_{B^n}R_{r,n}$ gives
\bb
P_S^{\otimes n}M_{r,n}P_S^{\otimes n} = W^{\otimes n}G(W^\dag)^{\otimes n},
\ee
where
\bb
G \coloneqq \E_C \left[ (CC^\dag)^{\otimes n} \right] .
\ee

We now consider the normalisation term. The law of $V$ is invariant under $V\mapsto U_A V$, so $M_{r,n}$ commutes with $U_A^{\otimes n}$ for every unitary $U_A$ on $A$. By Lemma~\ref{lemma_commutant}, $M_{r,n}$ commutes with every permutationally symmetric operator on $A^n$. In particular,
\bb
\big[M_{r,n},P_S^{\otimes n}\big] = 0\, .
\ee
Thus, $M_{r,n}$ is block diagonal with respect to $S^{\otimes n}\oplus(S^{\otimes n})^\perp$. Therefore its inverse square root
is block diagonal in the same decomposition. On the first block, i.e.\ on $S^{\otimes n}$, the block of $M_{r,n}$ is
\bb
W^{\otimes n}G(W^\dag)^{\otimes n}\, .
\ee
Hence, the corresponding block of $M_{r,n}^{-1/2}$ is
\bb
W^{\otimes n}G^{-1/2}(W^\dag)^{\otimes n}\, .
\ee
Here $G$ is invertible on $\mathcal H_B^{\otimes n}$. Indeed, for every $0\neq \ket{\xi} \in \mathcal{H}_B^{\otimes n}$,
\bb
\braket{\xi|G|\xi} = \E_C \left\|(C^\dag)^{\otimes n}\ket{\xi}\right\|^2 > 0\, ,
\ee
because $C=W^\dagger V$ is invertible with non-zero probability. Equivalently,
\bb
P_S^{\otimes n}M_{r,n}^{-1/2}P_S^{\otimes n} = W^{\otimes n} G^{-1/2}(W^\dag)^{\otimes n}\, .
\ee
Substituting the definition of $\widehat{R}_{r,n}$, and then using the two compression identities above, we obtain
\bb
&\big(P_S^{\otimes n}\otimes I_{B^n}\big)\, \widehat{R}_{r,n}\,\big(P_S^{\otimes n}\otimes I_{B^n}\big) \\
&= \big(P_S^{\otimes n}\otimes I_{B^n}\big)\,\big(M_{r,n}^{-1/2}\otimes I_{B^n}\big)\,R_{r,n}\, \big(M_{r,n}^{-1/2}\otimes I_{B^n}\big)\,\big(P_S^{\otimes n}\otimes I_{B^n}\big) \\
&= \big(P_S^{\otimes n}M_{r,n}^{-1/2}P_S^{\otimes n}\otimes I_{B^n}\big)\, \big(P_S^{\otimes n}\otimes I_{B^n}\big)\, R_{r,n}\,\big(P_S^{\otimes n}\otimes I_{B^n}\big)\,\big(P_S^{\otimes n}M_{r,n}^{-1/2}P_S^{\otimes n}\otimes I_{B^n}\big) \\
&= \big(W^{\otimes n}G^{-1/2}(W^\dagger)^{\otimes n}\otimes I_{B^n}\big)\,\big(W^{\otimes n}\otimes I_{B^n}\big)\, \E_C \left[ \ketbra{\Gamma_C}^{\otimes n} \right] \\
&\quad \cdot \big((W^\dagger)^{\otimes n}\otimes I_{B^n}\big)\, \big(W^{\otimes n}G^{-1/2}(W^\dagger)^{\otimes n}\otimes I_{B^n}\big) \\
&= \big(W^{\otimes n}\otimes I_{B^n}\big)\, \big(G^{-1/2}\otimes I_{B^n}\big)\, \E_C \left[ \ketbra{\Gamma_C}^{\otimes n} \right] \big(G^{-1/2}\otimes I_{B^n}\big)\,\big((W^\dagger)^{\otimes n}\otimes I_{B^n}\big)\, .
\ee
It remains to identify the middle operator. To this end, define the completely positive map
\bb
\Phi(X) \coloneqq \E_C \left[ C^{\otimes n}X(C^\dagger)^{\otimes n} \right] .
\ee
Then $G=\Phi(I_{B^n})$. Define also
\bb
\widetilde\Phi(X) \coloneqq G^{-1/2}\,\Phi(X)\,G^{-1/2}\, .
\ee

We claim that $\widetilde\Phi$ is the Haar twirl:
\bb
\widetilde\Phi(X) = \E_{U\in\mathsf U(r)} \left[ U^{\otimes n}X(U^\dagger)^{\otimes n} \right].
\ee
Indeed, the distribution of $C$ is bi-unitarily invariant, meaning that $U_1CU_2$ has the same distribution as $C$, for all $U_1,U_2\in\mathsf U(r)$. Hence, for every $U\in\mathsf U(r)$,
\bb
U^{\otimes n}\, \widetilde{\Phi}(X)\,(U^\dagger)^{\otimes n} = \widetilde{\Phi}(X)\, , \qquad \widetilde{\Phi}\left( U^{\otimes n}X(U^\dagger)^{\otimes n} \right) = \widetilde{\Phi}(X)\, .
\ee
Thus the output of $\widetilde\Phi$ is invariant, and $\widetilde\Phi$ only depends on the Haar average of its input.

Let $Y$ be invariant under conjugation by \(U^{\otimes n}\). Then \(Y\)
commutes with \(U^{\otimes n}\) for every \(U\). By
Lemma~\ref{lemma_commutant}, \(Y\) commutes with every permutationally symmetric
operator on \(\mathcal H_B^{\otimes n}\). Since \(C^{\otimes n}\) and
\((CC^\dagger)^{\otimes n}\) are permutationally symmetric,
\[
    [Y,C^{\otimes n}]=0,
    \qquad
    [Y,(CC^\dagger)^{\otimes n}]=0.
\]
Therefore
\bb
\begin{aligned}
    \Phi(Y)
    &=
    \mathbb E_C
    \left[ C^{\otimes n}Y(C^\dagger)^{\otimes n}
    \right]  = Y\,
    \mathbb E_C  \left[  (CC^\dagger)^{\otimes n} \right]  =  YG.
\end{aligned}
\ee
Moreover, the same commutation relation also gives \(YG=GY\), and hence
\bb
    \widetilde\Phi(Y) = G^{-1/2}YGG^{-1/2} = Y.
\ee
Thus \(\widetilde\Phi\) fixes every invariant operator.

For a general \(X\), define its Haar average
\bb
    \overline X \coloneqq  \mathbb E_{U\in\mathsf U(r)}
    \left[   U^{\otimes n}X(U^\dagger)^{\otimes n}  \right].
\ee
By the covariance identities above,
\bb
    \widetilde\Phi(X)=\widetilde\Phi(\overline X).
\ee
But \(\overline X\) is invariant, so the previous paragraph gives
\bb
    \widetilde\Phi(\overline X)=\overline X.
\ee
Together, these identities prove the claim.

Passing to Choi operators, we get
\bb
    (G^{-1/2}\otimes I_{B^n})
    \mathbb E_C
    \left[
        \left(
        \left|\Gamma_C\right\rangle
        \!\left\langle\Gamma_C\right|
        \right)^{\otimes n}
    \right]
    (G^{-1/2}\otimes I_{B^n})
    =
    \mathbb E_{U\in\mathsf U(r)}
    \left[
        \left(
        \left|\Gamma_U\right\rangle
        \!\left\langle\Gamma_U\right|
        \right)^{\otimes n}
    \right].
\ee
Substituting this into the previous expression gives
\bb
\begin{aligned}
& (P_S^{\otimes n}\otimes I_{B^n})
    \widehat R_{r,n}
    (P_S^{\otimes n}\otimes I_{B^n})                                      \\
&\quad =
    (W^{\otimes n}\otimes I_{B^n})
    \mathbb E_{U\in\mathsf U(r)}
    \left[
        \left(
        \left|\Gamma_U\right\rangle
        \!\left\langle\Gamma_U\right|
        \right)^{\otimes n}
    \right]
    ((W^\dagger)^{\otimes n}\otimes I_{B^n})                                \\
&\quad =
    \mathbb E_{U\in\mathsf U(r)}
    \left[
        \left(
        \left|\Gamma_{WU}\right\rangle
        \!\left\langle\Gamma_{WU}\right|
        \right)^{\otimes n}
    \right].
\end{aligned}
\ee
This is exactly the desired identity.
\end{proof}

We also need the same commutation trick used in the main text.

\begin{lemma}[(Commutation trick)]
\label{lem:rank-r-commutation}
For every state \(\rho_A\),
\bb
    \sqrt{\widehat R_{r,n}}
    (\rho_A^{\otimes n}\otimes I_{B^n})
    \sqrt{\widehat R_{r,n}}
    =
    (\sqrt{\rho_A}^{\otimes n}\otimes I_{B^n})
    \widehat R_{r,n}
    (\sqrt{\rho_A}^{\otimes n}\otimes I_{B^n}).
\ee
\end{lemma}

\begin{proof}
The law of \(V\) is invariant under \(V\mapsto U_AV\). Hence \(R_{r,n}\)
commutes with \(U_A^{\otimes n}\otimes I_{B^n}\) for every unitary \(U_A\).
Taking the partial trace over \(B^n\), the same invariance gives that
\(M_{r,n}\) commutes with \(U_A^{\otimes n}\). Therefore \(M_{r,n}^{-1/2}\) also
commutes with \(U_A^{\otimes n}\). Since \(\widehat R_{r,n}\) is obtained from
\(R_{r,n}\) by conjugating with \(M_{r,n}^{-1/2}\otimes I_{B^n}\), it follows
that \(\widehat R_{r,n}\) commutes with
\(U_A^{\otimes n}\otimes I_{B^n}\) for every \(U_A\).

By Lemma~\ref{lemma_commutant}, the invariance of \(\widehat R_{r,n}\) under
\(U_A^{\otimes n}\otimes I_{B^n}\), for every unitary \(U_A\) on \(A\), implies
that \(\widehat R_{r,n}\) commutes with
\(X_{A^n}\otimes I_{B^n}\) for every permutationally symmetric operator
\(X_{A^n}\) on \(A^n\). In particular, it commutes with both
\(\rho_A^{\otimes n}\otimes I_{B^n}\) and
\(\sqrt{\rho_A}^{\otimes n}\otimes I_{B^n}\). Consequently
\(\sqrt{\widehat R_{r,n}}\) also commutes with
\(\rho_A^{\otimes n}\otimes I_{B^n}\), and hence
\bb
    \sqrt{\widehat R_{r,n}}
    (\rho_A^{\otimes n}\otimes I_{B^n})
    \sqrt{\widehat R_{r,n}}
    =
    (\sqrt{\rho_A}^{\otimes n}\otimes I_{B^n})
    \widehat R_{r,n}
    (\sqrt{\rho_A}^{\otimes n}\otimes I_{B^n}) .
\ee
\end{proof}

We can now prove the finite-rank version of the random purification channel.

\begin{thm}[(Finite-rank random purification)]
\label{thm:finite-rank-rpc}
For every \(n\ge1\), the channel \(\Lambda_r^{(n)}\) satisfies the following
property. Let \(\rho_A\) be a state on \(A\) such that
\bb
    \operatorname{rank}\rho_A\le r.
\ee
Let \(|\psi_\rho\rangle\) be any purification of \(\rho_A\) on \(AB\), where
\(\dim B=r\), and write
\(\psi_\rho=|\psi_\rho\rangle\!\langle\psi_\rho|\). Then
\bb
    \Lambda_r^{(n)}(\rho_A^{\otimes n})
    =
    \mathbb E_{U_B\in\mathsf U(r)}
    \left[
        \bigl(
        (I_A\otimes U_B)
        \psi_\rho
        (I_A\otimes U_B^\dagger)
        \bigr)^{\otimes n}
    \right].
\ee
\end{thm}

\begin{proof}
It is enough to prove the claim for one fixed purification of \(\rho_A\).
Indeed, all purifications on \(AB\) differ by a unitary on \(B\), and the Haar
average in the statement is unchanged by such a unitary.

Let \(S\) be an \(r\)-dimensional subspace containing
\(\operatorname{supp}\rho_A\). Choose an isometry \(W:\mathcal H_B\to\mathcal H_A\)
with image \(S\), and define
\bb
    \left|\psi_\rho\right\rangle
    \coloneqq 
    (\sqrt{\rho_A}\otimes I_B)\left|\Gamma_W\right\rangle .
\ee
This is a purification of \(\rho_A\). Indeed, since \(WW^\dagger=P_S\) and
\(\rho_A\) is supported on \(S\),
\bb
    \operatorname{Tr}_B
    \left|\psi_\rho\right\rangle\!\left\langle\psi_\rho\right|
    =
    \sqrt{\rho_A}\,WW^\dagger\sqrt{\rho_A}
    =
    \sqrt{\rho_A}\,P_S\sqrt{\rho_A}
    =
    \rho_A.
\ee
Also,
\bb
    \rho_A^{\otimes n}
    =
    P_S^{\otimes n}\rho_A^{\otimes n}P_S^{\otimes n}.
\ee

We first note that the second term in the definition of \(\Lambda_r^{(n)}\)
vanishes on \(\rho_A^{\otimes n}\). It is enough to show that
\bb
    P_S^{\otimes n}\le \Pi_{r,n}.
\ee
Let \(0\neq \xi\in S^{\otimes n}\). Write \(\xi=W^{\otimes n}\eta\) for some
\(0\neq\eta\in\mathcal H_B^{\otimes n}\). With the notation
\(C=W^\dagger V\) from Lemma~\ref{lem:rank-r-compression}, we have
\bb
    \langle \xi|M_{r,n}|\xi\rangle
    &=
    \langle \eta|
    G
    |\eta\rangle  =
    \mathbb E_C
    \left[
        \left\|
        (C^\dagger)^{\otimes n}\eta
        \right\|^2
    \right]
    >0.
\ee
The strict positivity follows because \(C=W^\dagger V\) is invertible with
nonzero probability. Hence \(M_{r,n}\) is strictly positive on \(S^{\otimes n}\),
which proves \(P_S^{\otimes n}\le \Pi_{r,n}\). Combining this with the support
condition on \(\rho_A\), we get
\bb
    (I_{A^n}-\Pi_{r,n})\rho_A^{\otimes n}=0.
\ee
Thus the second term in the definition of \(\Lambda_r^{(n)}\) does not
contribute on \(\rho_A^{\otimes n}\).

By Lemma~\ref{lem:rank-r-commutation},
\bb
    \Lambda_r^{(n)}(\rho_A^{\otimes n})
    =
    (\sqrt{\rho_A}^{\otimes n}\otimes I_{B^n})
    \widehat R_{r,n}
    (\sqrt{\rho_A}^{\otimes n}\otimes I_{B^n}).
\ee
Since \(\rho_A\) is supported on \(S\), we can insert \(P_S^{\otimes n}\) on both
sides of \(\widehat R_{r,n}\). Using Lemma~\ref{lem:rank-r-compression}, we
obtain
\bb
\begin{aligned}
    \Lambda_r^{(n)}(\rho_A^{\otimes n})
    =
    \mathbb E_{U\in\mathsf U(r)}
    \left[
        \left(
        (\sqrt{\rho_A}\otimes I_B)
        \left|\Gamma_{WU}\right\rangle
        \!\left\langle\Gamma_{WU}\right|
        (\sqrt{\rho_A}\otimes I_B)
        \right)^{\otimes n}
    \right].
\end{aligned}
\ee

It remains only to rewrite the state inside the expectation. By the transpose
trick,
\bb
\begin{aligned}
    \left|\Gamma_{WU}\right\rangle
    &=
    (WU\otimes I_B)\left|\Gamma_r\right\rangle        \\
    &=
    (W\otimes I_B)(U\otimes I_B)\left|\Gamma_r\right\rangle \\
    &=
    (W\otimes I_B)(I_B\otimes U^T)\left|\Gamma_r\right\rangle \\
    &=
    (I_A\otimes U^T)\left|\Gamma_W\right\rangle .
\end{aligned}
\ee
Therefore, using the definition of \(|\psi_\rho\rangle\),
\bb
\begin{aligned}
    (\sqrt{\rho_A}\otimes I_B)\left|\Gamma_{WU}\right\rangle
    &=
    (\sqrt{\rho_A}\otimes I_B)
    (I_A\otimes U^T)\left|\Gamma_W\right\rangle       \\
    &=
    (I_A\otimes U^T)
    (\sqrt{\rho_A}\otimes I_B)\left|\Gamma_W\right\rangle \\
    &=
    (I_A\otimes U^T)\left|\psi_\rho\right\rangle .
\end{aligned}
\ee
Substituting this into the previous identity gives
\bb
    \Lambda_r^{(n)}(\rho_A^{\otimes n})
    =
    \mathbb E_{U\in\mathsf U(r)}
    \left[
        \bigl(
        (I_A\otimes U^T)
        \psi_\rho
        (I_A\otimes \overline U)
        \bigr)^{\otimes n}
    \right],
\ee
where \(\psi_\rho=|\psi_\rho\rangle\!\langle\psi_\rho|\). Since \(U^T\) is
Haar-distributed whenever \(U\) is Haar-distributed, we may rename \(U^T\) as
\(U_B\). This gives the desired identity.
\end{proof}

\begin{rem}[(Reduction to the full-rank construction)]
If \(r=d\), every isometry \(V:\mathcal H_B\to\mathcal H_A\) is a unitary. Hence
\bb
    M_{d,n}
    =
    \mathbb E_V
    \left[
        (VV^\dagger)^{\otimes n}
    \right]
    =
    I_{A^n}.
\ee
Consequently,
\bb
    \widehat R_{d,n}
    =
    R_{d,n}.
\ee
Moreover, when \(r=d\), the random isometry \(V:\mathcal H_B\to\mathcal H_A\)
is just a Haar-random unitary identifying \(B\) with \(A\). Therefore \(R_{d,n}\)
is exactly the random maximally entangled operator \(R_n\) of the main text.
Consequently,
\bb
    \Lambda_d^{(n)}(X)
    =
    \sqrt{R_n}(X\otimes I_{B^n})\sqrt{R_n},
\ee
which is precisely the full-rank random purification channel
\(\Lambda^{(n)}\) defined in the main text.
\end{rem}

\end{document}